\newlength{\figurewidth}
\newlength{\smallfigurewidth}
\def\bfx{{ \mathbf{x}}}
\def\defeq{{\stackrel{\Delta}{=}}}
\theoremstyle{plain}
\newtheorem{theorem}{Theorem}
\newtheorem{proposition}[theorem]{Proposition}
\newtheorem{lemma}[theorem]{Lemma}
\theoremstyle{definition}
\newtheorem{definition}[theorem]{Definition}
\theoremstyle{remark}
\newcommand\blfootnote[1]{%
  \begingroup
  \renewcommand\thefootnote{}\footnote{#1}%
  \addtocounter{footnote}{-1}%
  \endgroup
}
\DeclareMathOperator{\sign}{sign}
\begin{document}

\title
{\LARGE
\textbf{Correct Convergence of Min-Sum \\Loopy Belief Propagation\\ in a Block Interpolation Problem}
}


\author{%
Yutong Wang, Matthew G. Reyes, David L. Neuhoff
}

\maketitle


\begin{abstract}
This work proves a new result on the correct convergence of Min-Sum Loopy Belief Propagation (LBP) in an interpolation problem on a square grid graph. The focus is on the notion of local solutions, a numerical quantity attached to each site of the graph that can be used for obtaining MAP estimates.
The main result is that over an $N\times N$ grid graph with a one-run boundary configuration, the local solutions at each $i \in B$ can be calculated using Min-Sum LBP by passing difference messages in $2N$ iterations, which parallels the well-known convergence time in trees.
\end{abstract}

\section{Introduction}\label{sec:introduction}
\iftoggle{full_version}{\blfootnote{An abbreviated version of this paper has been submitted to ISIT 2017.}
\blfootnote{Authors addresses: Yutong Wang, Matthew G. Reyes, David L. Neuhoff, EECS Dept., University of Michigan; email: \{yutongw, mgreyes, neuhoff\}@umich.edu.}
}
This paper demonstrates the correct convergence of Loopy Belief Propagation (LBP) in the MAP interpolation of a block of sites given a configuration on its boundary, in the context of a uniform Ising Markov random field. There has been considerable work in analyzing the performance of LBP in the context of maximization problems, for example \cite{weiss2000correctness,weiss2001optimality, wainwright2004tree, bayati2008max}. This paper presents both a new problem setting for Belief Propagation (BP) and a new method of analysis.

In the context of Markov models, a very natural setting is that of MAP estimation of a subset of sites conditioned on a configuration on its boundary, as the Markov property itself tells us that a subset of sites is conditionally independent of all other sites if we know the configuration on the subset's boundary. Markov models are often expressed as products of functions on single nodes and edges of the associated Markov graph. Thus by taking the negative logarithm of the probability, MAP estimation can be formulated as what is referred as a {\em min-sum} problem, that of finding configurations that minimize a sum of functions defined on single nodes and edges of the graph. Belief Propagation is a recursive distributed algorithm that can be applied to a min-sum problem.\footnote{Belief Propagation in the context of MAP estimation is more often studied as a max-product problem, the variant obtained without taking the negative logarithm.}  

The Markov model considered in this paper is a uniform Ising model with positive correlation on a square grid graph with edges connecting horizontally and vertially adjacent nodes and with nodes assigned values $+1$ (black) or $-1$ (white) \cite{baxter2007exactly}. This is a single-parameter binary model that favors configurations in which neighboring nodes have the same value. Edges on which the two endpoints have different values are called {\em odd bonds}. Our problem is to MAP estimate the configuration $\bfx_B$ on a subset $B$ of sites conditioned on a boundary configuration $\bfx_{\partial B}$. In this context, MAP estimation amounts to finding configurations that minimize the sum

\begin{equation}\label{eq:odd bonds}O(\bfx_B, \bfx_{\partial B}) = \sum\limits_{\{i,j\}:i\in B}\mathbb{I}_{(x_j \not= x_i)}\end{equation}

\noindent of odd bonds over all edges in the graph with at least one endpoint in $B$. This problem arose in the context of an image compression application modeling binary images as instances of such an Ising model \cite{reyes2014lossy} and also in the context of grayscale image reconstruction \cite{farmer2011cutset}. Analytical solutions for the set of MAP configurations conditioned on boundary configurations containing 2 or 4 odd bonds have been found \cite{reyes2011cutset,reyes2014lossy}. Such boundaries are termed, respectively, \emph{1-run} and \emph{2-run boundaries}. The MAP configurations on a block conditioned on its boundary are referred to as {\em global solutions} for the boundary. 

Min-Sum LBP is a popular distributed message-passing algorithm for minimizing a sum of functions defined on edges of a graph. As a distributed algorithm, it does not attempt to compute global solutions, but rather, for each site, the minimum numbers of odd bonds in configurations where site $i$ is black ($x_i=+1$) or white ($x_i=-1$),
\begin{equation}\label{white}O^*_i(\pm 1,\mathbf x_{\partial B}) :=  \min\limits_{\mathbf x_B: x_i=\pm 1}O(\mathbf x_B,\mathbf x_{\partial B}).
\end{equation}
These minimum numbers of odd bonds provide some information regarding the set of global solutions. For example, if $O^*_i(-1,\mathbf x_{\partial B}) < O^*_i(1,\mathbf x_{\partial B})$, then we can say that site $i$ has value -1 in all global solutions, whereas if $O^*_i(-1,\mathbf x_{\partial B}) > O^*_i(1,\mathbf x_{\partial B})$, site $i$ has value 1 in all global solutions. On the other hand, if $O^*_i(-1,\mathbf x_{\partial B}) = O^*_i(1,\mathbf x_{\partial B})$, what we can say is that there exists a global solution in which site $i$ has value -1, and there exists a global solution in which site $i$ has value 1. Moreover, as pointed out in \cite{weiss2001optimality} when there are multiple sites such that $O^*_i(-1,\mathbf x_{\partial B}) = O^*_i(1,\mathbf x_{\partial B})$, a joint configuration on these sites cannot be chosen independently of each other.

In practice, the messages are normalized to prevent numerical overflow. As a result, the goal of BP becomes computing the difference 
\begin{equation}\label{definition: local solutions}o^*_i(\mathbf x_{\partial B}) = O^*_i(-1,\mathbf x_{\partial B}) - O^*_i(1,\mathbf x_{\partial B}),
\end{equation}

\noindent which we refer to as the \emph{local solution} at site $i$ given boundary configuration $x_{\partial B}$. At the $n$-th iteration of message-passing, an estimate $\hat o^n_i(\mathbf x_{\partial B})$ of the local solution at site $i$ is produced. If $B$ were a tree, i.e., an acyclic graph, the usual argument of the correct convergence of BP on trees could be adapted here to show that $\hat o^n_i(\mathbf x_{\partial B})$ converges $o^*_i(\mathbf x_{\partial B})$. For cyclic graphs such as the grid graphs considered in the present paper, general convergence is unknown except in special cases such as when the graph is a single cycle \cite{weiss2000correctness}. However, it was observed in \cite{reyes2011cutset} that empirically LBP converged to the correct local solutions for a 1-run boundary.

It is our belief that LBP can be an effective distributed algorithm for the MAP interpolation problem posed here.  While a complete understanding of the correct convergence properties of LBP is currently beyond our means, in this paper we prove in Theorem \ref{main theorem} that it correctly converges in the case where $B$ is an $N\times N$ grid graph with horizontal and vertical edges with a one-run boundary configuration. Specifically, we show that the local solution at each $i \in B$ can be calculated using Min-Sum Belief Propagation by passing difference messages in $2N$ iterations. We define the \emph{Forward} and \emph{Backward Convergence Property}, which are crucial for our analysis of the convergence.  To verify the correctness of the converged results of LBP, we use Proposition \ref{correctness oracle}, which is proven by leveraging the results in \cite{reyes2014lossy}. Thus, the results of this paper demonstrate that at least in the case of one-run boundaries, LBP converges to the correct local solution in what amounts with a minimal number of iterations. We hope our work here gives some theoretical justification for using LBP and local solutions for interpolation problems beyond this setting.

The remainder of this paper is as organized as follows. In Section \ref{sec:background}, we introduce background on graphs, the boundary interpolation problem, and Belief Propagation. In Section \ref{sec:local}, we introduce our message recursion, local solutions, and state what the correct local solutions are. In Section \ref{sec:forbackstab}, we introduce and discuss the concepts of forward and backward convergence used to prove our results. \iftoggle{full_version}{In Section \ref{sec:proof of main theorem}, we present the proof of our main result, Theorem \ref{main theorem}.}{Due to lack of space, our results are stated without proof, though these are available in \cite{}[{\color{red}full arxiv paper}]. However, one interesting case of the proof of our main result, Theorem \ref{main theorem}, is included in Section \ref{sec:proof of main theorem}.}

%
%

\section{Background and Problem Formulation}\label{sec:background}
We introduce notation on graphs, configurations, etc. Let $\mathbb{N} = \{1,2,\dots\}$ and $\mathbb{N}_0 = \mathbb{N} \cup \{0\}$. Edges in an undirected graph are written as $\{i,j\}$. In a directed graph, edges are written as $j \to i$. For an undirected graph $G= (V,E)$ and a subset $S \subseteq V$, we let $\partial S := \{j \in V \setminus S \mid \exists i \in S:\, \{j,i\} \in E\}$. Abusing notation, we often refer to a subset $S\subseteq V$ as if it is a subgraph. For instance, the statement ``suppose $S$ is connected" means ``suppose the $G$-induced subgraph on $S$ is connected". Another abuse of notation is $\partial i := \partial \{i\}$.

\subsection{Grid graphs, configurations, and odd bonds}\label{sec:setting of the paper}

In this subsection, we define the setting that we work in for the majority of the paper. Let  $G=(V,E)$ be the $(N+2)\times (N+2)$ \emph{grid graph} with the {\em 4-neighbor topology} in which the sites are arranged in a square lattice and the edges consist of horizontally and vertically adjacent sites of $V$. Two sites connected by an edge are referred to as {\em neighbors}. The \emph{interior} is the set of sites $B \subseteq V$ having four neighbors. The set $\partial B$ is the {\em boundary} of $B$, i.e., $\partial B = V \setminus B$.

For each site $i\in V$, $x_i\in\{-1,1\}$ is an assignment to site $i$. An assignment to a set of sites $S$ is called a \emph{configuration} and denoted $\bfx_S$. For concreteness, $x_i = 1$ (resp.\ $x_i = -1$) means that site $i$ is colored black (resp.\ white). An edge $\{i,j\}$ with $x_i \not= x_j$ is called an {\em odd bond}.  A configuration $\bfx_{\partial B}$ on $\partial B$ is called a \emph{boundary configuration}. Finally, we define one-run boundaries:

\begin{definition}[\textbf{One-run boundary}]\label{def:one-run boundary}
Let $\bfx_{\partial B}$ be a boundary configuration. Define $R^+ = \{i \in \partial B \mid x_i = +1\}$ and define $R^-$ similarly. We say that $\bfx_{\partial B}$ is a \emph{one-run configuration} if $R^+$ is a connected subgraph. 
\end{definition}


\subsection{MAP Estimation and Global Solutions}

Given $\bfx_{\partial B}$ and interior configuration $\bfx_B$, the quantity $O(\bfx_B, \bfx_{\partial B})$ as defined by equation (\ref{eq:odd bonds}) is the number of odd bonds within $B$ and between $B$ and $\partial B$. The {\em global minimum} number of odd bonds between an interior configuration $\mathbf x_B$ and the given boundary configuration $\mathbf x_{\partial B}$ is
$$
O^*_B(\mathbf x_{\partial B}) :=  \min\limits_{\mathbf x_B}O(\mathbf x_B,\mathbf x_{\partial B}).$$

In \cite{reyes2011cutset, reyes2014lossy}, all MAP solutions were found for all one-run boundary configurations and at least one MAP solution was for found for every two-run boundary configuration. Specifically, for boundaries consisting of one-run of white and one-run of black, the MAP solutions consisted of configurations generated by a shortest path connecting the endpoints of either runs. In this work, we refer to a MAP solution as a \emph{global solution}.

\subsection{Belief Propagation}

We first review BP for interpolating from the boundary in the context of a tree. For now, let $G= (V,E)$ be an arbitrary tree apart from the grid graph currently under consideration. Let $B\subseteq V$ be a subtree and $\bfx_{\partial B}$ a boundary configuration. Consider any two adjacent nodes $i,j \in B$. Removing edge $\{i,j\}$ from $B$ disconnects $B$ into two connected components. Let $T_j^i\subseteq B$ be the connected component containing $j$. 
Define $M_{j\to i}(-1)$ to be the minimal number of odd bonds in $T^i_j \cup \partial B$ over all possible configurations $\bfx_{T^i_j}$ on $T^i_j$ plus the odd bond between $j$ and $i$, if any. More precisely,

$$M_{j\to i}(-1) = \mathbb{I}_{(x_j \ne 1)}+ \min_{\bfx_{T^i_j}} O(\bfx_{T^i_j}, \bfx_{\partial B}).$$

Recall $O^*_i(-1,\mathbf x_{\partial B})$ as defined by equation (\ref{white}). Below, we suppress the dependency on $\mathbf x_{\partial B}$ and simply write $O^*_i(-1)$. Likewise, we write $O^*_i(1)$. Since $B$ is a tree, it is easy to see that $O^*_i(-1)$ could be expressed as
\begin{equation}\label{unnormalized read out}
O^*_i(-1) =  \sum\limits_{j\in\partial i} M_{j\rightarrow i}(-1).
\end{equation}
We define $M_{j\rightarrow i} \defeq [M_{j\rightarrow i}(-1),M_{j\rightarrow i}(1)]$ to be the {\em message} from site $j$ to site $i$. Using a recursive argument, it is straightforward to show 
\begin{eqnarray}
M_{j\rightarrow i}(x_i) & = & \min \limits_{x_j \in \{\pm 1\}}\left\{ \mathbb{I}_{(x_i\not=x_j)} + \sum\limits_{k\in\partial j\setminus i}M_{k\rightarrow j}(x_j)\right\} \nonumber
\end{eqnarray}

The above recursion relation induces a message-passing algorithm:

\begin{definition}\label{unnormalized message} Given a boundary configuration $\bfx_{\partial B}$, for each edge $\{i,j\} \in E$ such that $i \in B$ or $j \in B$ define

\emph{Boundary condition:} $M^n_{j\to i}(x_i) := \mathbb{I}_{(x_j \ne x_i)}$ for all $n \ge 0$, if $j \in \partial B$,

\emph{Initialization:} $M^0_{j\to i}(x_i) := 0$ if $j \in B$,

\emph{Update:} If $j \in B$ and $n > 0$, then
\begin{eqnarray}
M^n_{j\rightarrow i}(x_i) :=  \min \limits_{x_j \in \{\pm 1\}}\left\{ \mathbb{I}_{(x_i\not=x_j)} + \sum\limits_{k\in\partial j\setminus i}M^{n-1}_{k\rightarrow j}(x_j)\right\} \nonumber
\end{eqnarray}
\end{definition}
Since the graph $B$ considered in this subsection is acyclic, this algorithm is referred to as Belief Propagation (BP),
or more specifically, as Min-Sum BP. After a number of iterations equal to the
length of the longest path in $B$, equation (\ref{unnormalized read out}) permits computation of $
O_i^* (\pm 1)$ for
each site in $B$. For cyclic graphs, such as the graph considered in this paper as defined in Section \ref{sec:setting of the paper}, the algorithm above can still be used and is referred to as \emph{Loopy} Belief Propagation.

\section{Difference messages and local solutions}\label{sec:local}

For the remainder of this paper, we are in the setting of Section \ref{sec:setting of the paper}. To avoid numerical overflow, it is standard practice to normalize the messages in LBP. In our case, we pass the \emph{difference messages} $m_{j\rightarrow i}^n := M_{j\rightarrow i}^n(-1) - M_{j\rightarrow i}^n(1)$ which satisfy the folowing recursion.

\begin{lemma}[\textbf{Difference Messages}]\label{difference message lemma}
	Definition \ref{unnormalized message} induces the following message-passing dynamics on the difference messages as follows: Given a boundary configuration $\bfx_{\partial B}$, for each edge $\{i,j\} \in E$ such that $i \in B$ or $j \in B$ we have
	
	\emph{Boundary condition}: $m^n_{j\to i} := x_j$ for all $n \ge 0$, if $j \in \partial B$,
	
	\emph{Initialization}: $m^0_{j\to i} := 0$ if $j \in B$,
	
	\emph{Update}: If $j \in B$ and $n > 0$, then 
	$$m^n_{j\to i} := \sign\left\{\sum_{k \in \partial j \setminus i}  m_{k\to j}^{n-1} \right\}$$ where $\sign(t) = t/|t|$ for $t \ne 0$ and $\sign(0) = 0$.
\end{lemma}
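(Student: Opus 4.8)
The plan is to derive the difference-message recursion directly from the update rule in Definition \ref{unnormalized message} by computing $M^n_{j\to i}(-1) - M^n_{j\to i}(1)$ in each of the three cases. First I would handle the boundary condition and initialization, which are immediate: if $j \in \partial B$, then $M^n_{j\to i}(x_i) = \mathbb{I}_{(x_j \ne x_i)}$, so $m^n_{j\to i} = \mathbb{I}_{(x_j \ne -1)} - \mathbb{I}_{(x_j \ne 1)}$, which evaluates to $+1$ when $x_j = 1$ and $-1$ when $x_j = -1$, i.e. exactly $x_j$. Likewise, if $j \in B$, then $M^0_{j\to i}(-1) = M^0_{j\to i}(1) = 0$ gives $m^0_{j\to i} = 0$.

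The substantive step is the update. Writing $S_{x_j} := \sum_{k \in \partial j \setminus i} M^{n-1}_{k\to j}(x_j)$, the update gives $M^n_{j\to i}(x_i) = \min\{ \mathbb{I}_{(x_i \ne 1)} + S_1,\; \mathbb{I}_{(x_i \ne -1)} + S_{-1}\}$. Evaluating at $x_i = -1$ yields $M^n_{j\to i}(-1) = \min\{1 + S_1, S_{-1}\}$, and at $x_i = 1$ yields $M^n_{j\to i}(1) = \min\{S_1, 1 + S_{-1}\}$. The plan is then to subtract these two minima and show the result equals $\sign(S_{-1} - S_1)$. Here I would observe the key simplification: writing $D := S_{-1} - S_1$, a short case analysis on the sign of $D$ (using that $S_{-1}, S_1$ are integers, so $D$ is an integer and $|D| \ge 1$ whenever $D \ne 0$) shows that $m^n_{j\to i} = M^n_{j\to i}(-1) - M^n_{j\to i}(1)$ collapses to $\sign(D)$. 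Finally, I would rewrite $D = S_{-1} - S_1 = \sum_{k \in \partial j \setminus i}\big(M^{n-1}_{k\to j}(-1) - M^{n-1}_{k\to j}(1)\big) = \sum_{k \in \partial j \setminus i} m^{n-1}_{k\to j}$, which is precisely the argument of the $\sign$ in the claimed recursion.

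The one point that requires genuine care — and where I expect the main (though modest) obstacle to lie — is verifying that the difference of the two minima is exactly $\sign(D)$ rather than something that merely has the same sign. The subtlety is that $\min\{1+S_1, S_{-1}\} - \min\{S_1, 1+S_{-1}\}$ is not obviously clamped to $\{-1,0,+1\}$ a priori; one must use the integrality of the messages together with the fact that the two arguments of each min differ only by the indicator term (bounded by $1$). Concretely, in the case $D \ge 1$ one checks both minima resolve so that the difference is exactly $+1$; in the case $D \le -1$ it is exactly $-1$; and when $D = 0$ the two minima coincide and the difference is $0$. This hinges on the messages taking integer values at every iteration, which itself follows by an easy induction from Definition \ref{unnormalized message} (the boundary values are indicators, the initialization is $0$, and integrality is preserved by the update). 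I would state this integrality observation explicitly, as it is what guarantees the $\sign$ output lands in $\{-1,0,+1\}$ and that no intermediate fractional values can appear.
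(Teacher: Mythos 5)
Your proposal is correct and takes essentially the same approach as the paper's own proof: both expand the two minima from Definition \ref{unnormalized message} (your $S_{\pm 1}$ is the paper's $\Phi(\pm 1)$), perform the case analysis on the sign of $D = S_{-1}-S_1$, and use integrality of the messages to conclude that the difference of the minima collapses exactly to $\sign(D) = \sign\bigl\{\sum_{k\in\partial j\setminus i} m^{n-1}_{k\to j}\bigr\}$. Your explicit flagging of integrality (which is genuinely needed, since for non-integer $D$ with $|D|<1$ the difference of minima equals $D$ rather than $\sign(D)$) is a point the paper invokes only in passing via ``$\Phi$ maps into $\mathbb{Z}$,'' so your write-up is, if anything, slightly more careful on that step.
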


\iftoggle{full_version}{
\begin{proof}
	For the \emph{boundary condition}, if $j \in \partial B$, then $m_{j\rightarrow i}^n = M_{j\rightarrow i}^n(-1) - M_{j\rightarrow i}^n(1) = \mathbb{I}(x_j \ne -1)-\mathbb{I}(x_j \ne 1) = x_j$.
	
	There is nothing to check for the \emph{initialization}.
	
	For the \emph{update}, fix an $n > 0$ and an edge $\{i,j\} \in E$ such that $j \in B$. For $z \in \{\pm 1\}$, let $\Phi(z):= \sum_{k \in \partial j \setminus i} M_{k\to j}^{n-1}(z).$ By definition, we have
	\begin{align*}
	m^n_{j\to i} &= M_{j\to i}^{n}(1) - M_{j\to i}^{n}(-1)\\
	&=\min_{z \in \{\pm 1\}} \left\{\mathbb{I}(1 \ne z) + \Phi(z)\right\}  \\
	&\qquad -\min_{z \in \{\pm 1\}} \left\{\mathbb{I}(-1 \ne z) + \Phi(z)\right\}\\
	&=\min \left\{\Phi(1), 1 + \Phi(-1)\right\}  \\
	&\qquad -\min\left\{1 + \Phi(1), \Phi(-1)\right\}
	\end{align*}
	From the above and the fact that $\Phi_j$ maps into $\mathbb{Z}$, one gets
	\begin{align*} m^{n}_{j\to i} &=\begin{cases} 
	-1 & \Phi(1) > \Phi(-1) \\
	1 & \Phi(1) < \Phi(-1) \\
	0 & \Phi(1) = \Phi(-1)
	\end{cases}\\
	& = \sign\{\Phi(-1) - \Phi(1)\}
	\end{align*}
	By definitions, $\Phi(-1) - \Phi(1) = \sum_{k \in \partial j \setminus i} M_{k\to j}^{n-1}(-1) - M_{k\to j}^{n-1}(1) = \sum_{k \in \partial j \setminus i} m_{k\to j}^{n-1}.$ Thus, we are done. 
\end{proof}
}

By passing difference messages $\{m_{i\to j}^n\}$ rather than the original messages $\{M_{i\to j}^n\}$ we are unable to compute the quantities $O_i^*(1)$ and $O_i^*(-1)$. Nevertheless, we can use the difference messages to estimate the local solutions $o^*_i(\mathbf x_{\partial B})$ as defined by equation (\ref{definition: local solutions}). In Theorem \ref{main theorem}, we show that these estimates indeed converge to the truth. Below, we make the dependence of $o^*_i(\mathbf x_{\partial B})$ on the boundary condition $\bfx_{\partial B}$ implicit and simply write $o^*_i$. 


Before discussing the convergence of Min-Sum LBP, we first use the global solutions found in \cite{reyes2011cutset,reyes2014lossy} to directly compute the local solutions for sites $i\in B$. Using these local solutions, we will in Section \ref{sec:proof of main theorem} show that Min-Sum LBP converges to the correct local solutions for 1-run boundaries.

Let $\bfx_{\partial B}$ be a one-run boundary. A \emph{positive simple path} is a subset $P^+=\{i_1,\dots, i_k\}$ of $V$ such that 
\begin{enumerate}    
\item The subgraph $P^+$ is a path,
\item $i_1, i_k \in \partial B$ are the two endpoints of $R^+$,
\item for $1 < \ell < k$, either $i_\ell \in B$ or, if $i_\ell \in \partial B$, $x_{i_\ell} = 1$,
\item $k$ is minimal satisfying the above properties.
\end{enumerate}
A \emph{negative} simple path $P^-$ is defined in the same way by replacing $x_{i_\ell} = 1$ with $x_{i_\ell} = -1$ in item 3 above. Define the \emph{positive inner (resp.\ outer) path} $P^+_{I}$ (resp.\ $P^+_{O}$) be the positive simple path that minimizes (resp.\ maximizes) over all positive simple paths $P^+$ the number of nodes enclosed by $P^+\cup R^+$. Similarly, define $P^-_{I}$ and $P^-_{O}$. See examples in Figure \ref{inner and outer path 2}.


\begin{figure}[H]
\centering
\begin{subfigure}[b]{0.2\textwidth}
\includegraphics[width=1\textwidth]{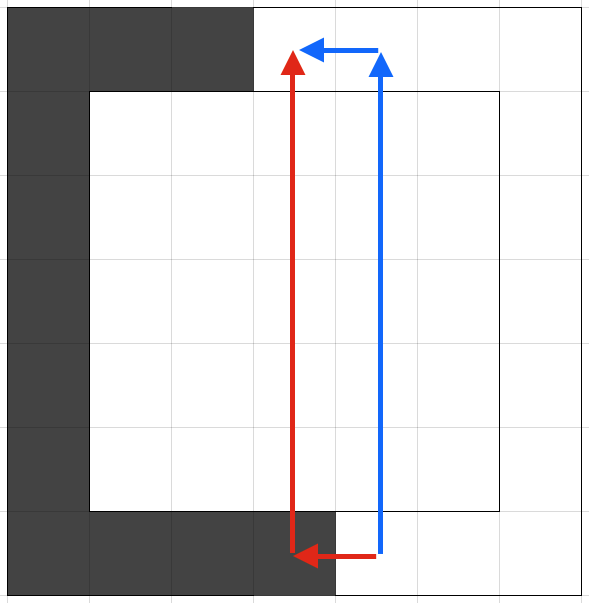}
\end{subfigure}
~
\begin{subfigure}[b]{0.2\textwidth}
\includegraphics[width=1\textwidth]{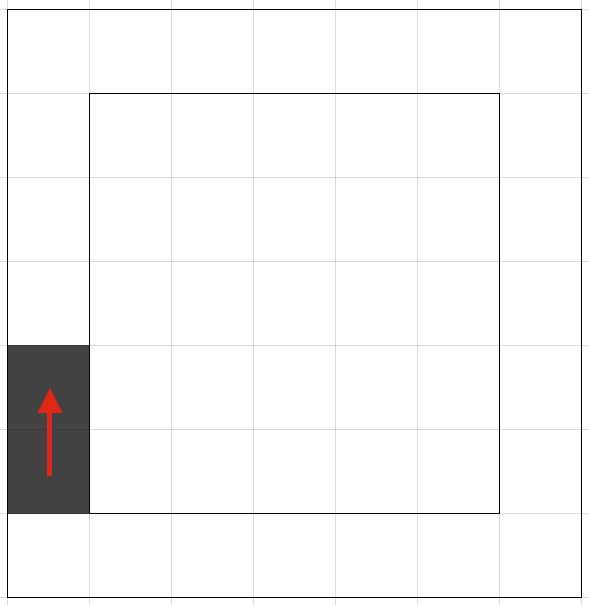}
\end{subfigure}
~
\begin{subfigure}[b]{0.2\textwidth}
\includegraphics[width=1\textwidth]{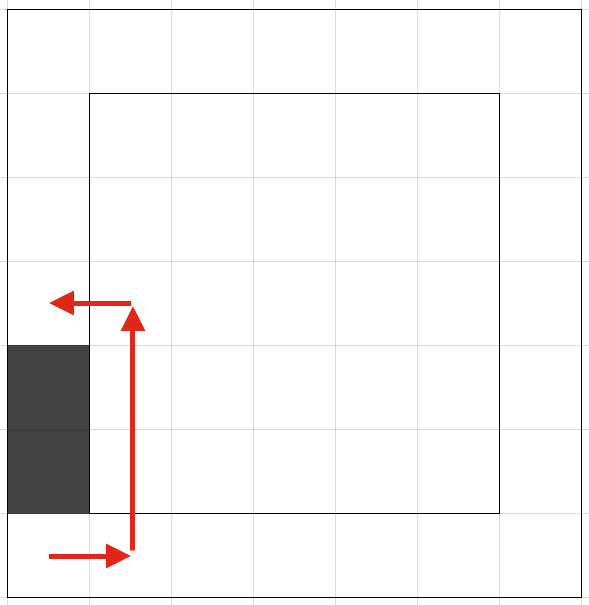}
\end{subfigure}
\caption{Left: Red path $= P_O^-$, blue path $= P_I^-$. Middle \& right: $P^+_O = P^+_I$.}\label{inner and outer path 2}
\end{figure}

Define $I^+$ to be the set $P^+_{I} \cup R^+$ together with the set of nodes enclosed by them. Similarly, define $O^+$, $I^-$ and $O^-$. We use the calligraphic font to denote intersection of these sets with $B$, e.g.\ $\mathcal I^+ = I^+ \cap B$, $\mathcal O^+ = O^+ \cap B$ and so on. We call $\mathcal I^+$ the \emph{positive inner region} $\mathcal O^+$ the \emph{positive outer region}, and so on. Define $\delta \mathcal{I}^+= \{ v \in \mathcal{I}^+ \mid v  \text{ touches } O^-\}$ and  $\delta \mathcal{I}^-= \{ v \in \mathcal{I}^+ \mid v  \text{ touches } O^+\}$. Note that $\delta \mathcal{I}^+$ is not the same as $\partial \mathcal{I}^+$, with which we are not concerned. The following proposition\iftoggle{full_version}{}{, whose proof is in the full version, } gives the local solutions over regions that we have defined.

\begin{proposition}\label{correctness oracle}
	For a one-run boundary $\bfx_{\partial B}$,
		$$
	o^*_i(\bfx_{\partial B})  = \begin{cases}
	\pm 4 &: i\in \mathcal{I}^\pm \setminus \delta \mathcal{I}^\pm  \\
	\pm 2 &:i\in \delta \mathcal{I}^\pm  \\
	0 &:  i\in \mathcal{O}^+ \cap \mathcal{O}^-
	\end{cases}
	$$
are the local solutions for sites $i\in B$.
\end{proposition}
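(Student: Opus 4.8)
The plan is to read off the local solutions directly from the explicit description of the global solutions for one-run boundaries obtained in \cite{reyes2011cutset,reyes2014lossy}, namely that every global solution is a \emph{threshold} configuration whose black region is bounded by a shortest positive simple path together with $R^+$, and that these black regions range exactly from the innermost $I^+$ to the outermost $O^+$. Before using any of this, I would record two elementary facts valid for \emph{any} boundary configuration. First, any two interior configurations differ by a sequence of single-site flips, and each flip at a degree-$4$ interior site toggles its four incident edges and hence changes the total number of odd bonds by an even amount; therefore all configurations compatible with a fixed $\bfx_{\partial B}$ have the same parity of odd bonds, so $o^*_i = O^*_i(-1)-O^*_i(+1)$ is even. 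Second, flipping an optimal configuration for $x_i=+1$ into one with $x_i=-1$ changes the cost by at most $4$, so $|o^*_i|\le 4$. Together these give $o^*_i\in\{-4,-2,0,2,4\}$, and it remains only to decide which value occurs in each region.

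Next I would translate the geometric regions into combinatorial statements about the global solutions. Using the shortest-path description, I would show that $\mathcal{I}^+$ is exactly the set of sites black in \emph{every} global solution, $\mathcal{I}^-$ the set white in every global solution, and $\mathcal{O}^+\cap\mathcal{O}^-$ the set that is black in some and white in some global solution (so the three cases genuinely partition $B$). Granting this, the sign of $o^*_i$ is immediate: for $i\in\mathcal{O}^+\cap\mathcal{O}^-$ both $O^*_i(+1)$ and $O^*_i(-1)$ equal $O^*_B$, so $o^*_i=0$; for $i\in\mathcal{I}^+$ we have $O^*_i(+1)=O^*_B<O^*_i(-1)$, so $o^*_i\in\{2,4\}$; and $\mathcal{I}^-$ is symmetric with $o^*_i\in\{-2,-4\}$.

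To pin down the magnitude on $\mathcal{I}^+$ (the case $\mathcal{I}^-$ following by exchanging the two colors) I would use the observation that for $i\in\mathcal{I}^+$ every global solution leaves $i$ black with \emph{at most one} white neighbor: if some global solution gave $i$ two or more white neighbors, flipping $i$ to white would change the cost by $4-2w\le 0$, forcing $O^*_i(-1)=O^*_B$ and contradicting $i\notin\mathcal{O}^-$. Now if $i\in\delta\mathcal{I}^+$, then $i$ touches $O^-$, so there is a global solution in which $i$ has exactly one white neighbor; flipping $i$ there yields a configuration with $x_i=-1$ of cost $O^*_B+2$, whence $o^*_i\le 2$ and therefore $o^*_i=2$. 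If instead $i\in\mathcal{I}^+\setminus\delta\mathcal{I}^+$, then $i$ has no neighbor in $O^-$, so in every global solution $i$ and all four neighbors are black; the flip bound gives $O^*_i(-1)\le O^*_B+4$, and the whole content is the matching lower bound $O^*_i(-1)\ge O^*_B+4$, i.e.\ ruling out the value $2$.

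The main obstacle is precisely this lower bound for the deep sites, where a purely local flip argument is insufficient: a near-optimal configuration with $x_i=-1$ could reach $i$ through a thin white tendril or a small enclosing contour rather than through a single isolated flip. To exclude this I would pass to the contour (dual) picture, in which the odd bonds of any configuration consistent with a one-run boundary form one path joining the two boundary color-change points together with disjoint closed loops; a configuration of cost $\le O^*_B+2$ thus has a budget of only $2$ beyond a shortest boundary-joining path. Enclosing $i$ in an isolated white region requires a contour of length at least $4$, and making the separating path detour so as to place $i$ on the white side costs an excess of at least $2d$, where $d$ is the distance from $i$ to $O^-$; the hypothesis $i\notin\delta\mathcal{I}^+$ forces $d\ge 2$ and hence an excess of at least $4$, exceeding the budget. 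Formalizing this distance-versus-excess estimate and matching it to the exact shortest-path geometry of \cite{reyes2014lossy} is where the real work lies, and the definitions of $P^+_I$, $P^+_O$, and $\delta\mathcal{I}^+$ are tailored so that ``touches $O^-$'' is exactly the threshold between a cheap $+2$ reroute and an unavoidable $+4$ defect.
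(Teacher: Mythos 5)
Your proposal is correct and, at the skeleton level, follows the same route as the paper's own proof: both derive the local solutions from the explicit global solutions of \cite{reyes2011cutset,reyes2014lossy}, both obtain $o^*_i=0$ on $\mathcal{O}^+\cap\mathcal{O}^-$ from the existence of MAP configurations of both colors at $i$, and both handle $\delta \mathcal{I}^\pm$ via the identical ``at most one white neighbor, flip it'' argument. Where you genuinely depart is the deep case $i\in\mathcal{I}^\pm\setminus\delta\mathcal{I}^\pm$ and the bookkeeping around it. The paper argues purely locally: in every MAP configuration all four neighbors of $i$ share its color, so flipping $i$ adds $4$, and flipping neighbors as well ``further increases the number of odd bonds by at least 2''; it then concludes $O^*_i(-1)=O^*_i(1)+4$. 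As you observe, this yields the upper bound but is only heuristic for the lower bound, since it compares against configurations reachable from a MAP configuration by flipping $i$ and some neighbors, not against an arbitrary configuration with $x_i=-1$ (your ``thin white tendril'' scenario). Your contour/budget argument --- a $+2$ budget admits no closed loops, and detouring the separating path to capture $i$ costs at least $2\,\mathrm{dist}(i,O^-)\ge 4$ --- is aimed exactly at that gap, and your parity observation likewise closes the small step the paper silently skips in the $\delta\mathcal{I}^+$ case (excluding $O^*_i(-1)=O^*_B+1$). What the paper's route buys is brevity; what yours buys is an actual lower bound for the deep sites. Note, however, that your key estimate (excess at least $2d$) is itself left as a sketch and would still need the isoperimetric-type proof you describe, so in its present form your argument is not more complete than the paper's --- only more explicit about where the real difficulty sits.
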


\iftoggle{full_version}{
\begin{proof}

    Let $i\in \mathcal{O}^+ \cap \mathcal{O}^-$. Thus $i\in \mathcal{O}^+$, which means that there is a MAP configuration, namely the one generated by the black outer path, in which site $i$ is black. Likewise, $i\in \mathcal{O}^-$, which means that there is a MAP configuration in which site $i$ is white. Therefore, $O^*_i(-1) = O^*_i(1)$, and hence $o^*_i = 0$.

    Now let $i\in \mathcal{I}^+ \setminus \delta \mathcal{I}^+$. Since $i\in \mathcal{I}^+$, site $i$ is black in all MAP configurations, therefore $O^*_i(1) < O^*_i(-1)$. Moreover, since $i\notin \delta \mathcal{I}^+$, in all MAP configurations, every neighbor $j\in\partial i$ is also black. Therefore, in any MAP configuration, if we flip site $i$ to white, then, this will increase the number of odd bonds by 4. If we also flip a neighbor $j\in\partial i$ to white, this will further increase the number of odd bonds by at least 2. Therefore $O^*_i(-1) = O^*_i(1) + 4$.

    Let $i\in \delta \mathcal{I}^+$. Since $i\in \mathcal{I}^+$, site $i$ is black in all MAP configurations, therefore $O^*_i(1) < O^*_i(-1)$. By definition of $\delta \mathcal{I}^+$, there is a MAP configuration $x_B$ in which a neighbor $j\in\partial i$ is white. We claim that there can only be one such neighbor $j$. This is because if there were two such neighbors, then we could flip $i$ to white and keep the number of odd bonds the same, which would imply that there is a MAP configuration in which $i$ is white. If there were more than two such neighbors $j\in\partial i$, then flipping $i$ to white would strictly decrease the number of odd bonds, which contradicts $O^*_i(1) < O^*_i(-1)$. Thus, $O^*_i(-1) = O^*_i(1) + 2$.

    The remaining two cases follow from arguments analogous to those in the previous two paragraphs. This completes the proof. \end{proof}
}

For each $i \in B$, define $\hat o^n_i   :=  \sum_{j\in\partial i} m^{n}_{j\rightarrow i}$ to be the \emph{estimates} of the local solution $o^*_i$. Applying the usual arguments for correct convergence of BP on trees (see \cite{pearl2014probabilistic} for instance) gives the following proposition

\begin{proposition}\label{tree result} Let $G=(V,E)$ be a tree and $B\subseteq V$ be a subtree, and $\bfx_{\partial B}$ be any boundary configuration. Then the difference messages $m^n_{j\to i}$ converge and the estimates $\hat o^n_i$
	converge to $o^*_i$ in number of steps equal to the diameter of the tree plus 1.
\end{proposition}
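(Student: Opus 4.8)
The plan is to reduce everything to a finite-stabilization statement for the \emph{unnormalized} messages $M^n_{j\to i}$ of Definition \ref{unnormalized message}, and then read off both assertions of the proposition. Recall that on a tree the true message $M_{j\to i}$ is well defined through the component $T^i_j$, that the iterated messages obey the recursion of Definition \ref{unnormalized message}, and that the exact read-out $O^*_i(x_i) = \sum_{j\in\partial i} M_{j\to i}(x_i)$ holds, i.e.\ equation \eqref{unnormalized read out}. By Lemma \ref{difference message lemma} the difference messages are literally $m^n_{j\to i} = M^n_{j\to i}(-1) - M^n_{j\to i}(1)$, so once I establish that $M^n_{j\to i} = M_{j\to i}$ for all large $n$, the difference messages converge, and summing over $j\in\partial i$ and invoking \eqref{unnormalized read out} gives $\hat o^n_i = \sum_{j\in\partial i} m^n_{j\to i} \to O^*_i(-1) - O^*_i(1) = o^*_i$, as required.

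The core is then a finite-stabilization claim proved by induction on the depth of the directed subtree $T^i_j$. For a directed edge $j\to i$ with $j\in B$, set $d(j\to i)=1$ if every neighbor $k\in\partial j\setminus i$ lies in $\partial B$, and otherwise set $d(j\to i) = 1 + \max\{d(k\to j) : k\in\partial j\setminus i,\ k\in B\}$; this is finite since $T^i_j$ is a finite tree. I claim $M^n_{j\to i} = M_{j\to i}$ for every $n\ge d(j\to i)$. In the base case all incoming messages $M^{n-1}_{k\to j}$ originate at boundary sites, where they equal the fixed value $\mathbb{I}(x_k\ne\cdot)$ for all $n-1\ge 0$, so a single update produces the correct message. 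For the inductive step, each $k\in\partial j\setminus i$ is either a boundary site (whose incoming message is correct at every iteration) or satisfies $d(k\to j)\le d(j\to i)-1$, so by the inductive hypothesis $M^{n-1}_{k\to j}=M_{k\to j}$ whenever $n-1\ge d(j\to i)-1$; feeding these correct inputs into the update of Definition \ref{unnormalized message} yields $M^n_{j\to i}=M_{j\to i}$ for $n\ge d(j\to i)$.

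It follows that for each $i\in B$ the messages entering $i$ are all correct once $n\ge \max_{j\in\partial i} d(j\to i)$, at which point the difference messages into $i$ are constant (hence convergent) and $\hat o^n_i$ equals $o^*_i$ and stays there. What remains is to bound the stabilization time by the diameter plus one. Unwinding the definition, $d(j\to i)$ equals the number of interior sites on the longest path in $T^i_j\cap B$ issuing from $j$, so $\max_{j\in\partial i} d(j\to i)$ is governed by the length of the longest path from $i$ to a boundary site; maximizing over $i$, and observing that such a path extended by its terminal boundary edge is a path in the tree, yields the bound in terms of the diameter. The conceptual content is entirely in the stabilization induction; the one step that needs care — and where the additive $+1$ enters — is precisely this last bookkeeping, matching the per-site subtree heights against the global diameter and accounting for the single extra iteration contributed by the boundary layer.
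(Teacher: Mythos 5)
Your proposal is correct and is essentially the argument the paper itself relies on: the paper gives no proof of Proposition \ref{tree result}, instead invoking ``the usual arguments for correct convergence of BP on trees'' (citing Pearl's book), and your induction on the height $d(j\to i)$ of the directed subtree $T^i_j$, combined with the fixed-point recursion for the true messages and the read-out equation (\ref{unnormalized read out}), is precisely that standard argument. One minor remark: your closing bookkeeping assumes the terminal vertex of a longest interior path has a boundary neighbor to extend along (it may instead be a leaf of $G$), but this is harmless, since the interior path itself already has length at most the diameter, so the stabilization time you establish meets the stated bound with the $+1$ to spare.
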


We now present our main result.

\begin{theorem}\label{main theorem}
Let $G= (V,E)$ and $B\subseteq V$ be as defined in Section \ref{sec:setting of the paper}, and $\bfx_{\partial B}$ be a one-run boundary configuration. Then for every edge $\{j,i\}$ in $G$ with $j \in B$, the difference messages $m^n_{j\to i}$ converge and the estimates $\hat o^n_i$
converge to $o^*_i$ in number of steps equal to $2N$.\end{theorem}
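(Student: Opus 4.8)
The plan is to treat the sign recursion of Lemma~\ref{difference message lemma} as a synchronous discrete dynamical system on the edge-messages, all of which take values in $\{-1,0,1\}$, and to show that this system settles, within $2N$ steps, onto the message configuration whose readout $\hat o_i = \sum_{j\in\partial i} m_{j\to i}$ matches the values prescribed by Proposition~\ref{correctness oracle}. The guiding geometric fact is that the one-run hypothesis forces the inner and outer paths $P^{\pm}_I, P^{\pm}_O$ to carve $B$ into the forced regions $\mathcal{I}^+$ and $\mathcal{I}^-$, separated by the ambiguous band $\mathcal{O}^+\cap\mathcal{O}^-$. I would show that each message locks onto a final value determined only by which region its edge sits in and by its distance to the boundary, so that convergence is governed by how fast a ``wavefront'' of stabilized messages sweeps inward from $\partial B$. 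This is the reason the tree argument behind Proposition~\ref{tree result}, which blows up on loopy computation trees, can be replaced by a direct, geometry-aware propagation argument with a tight time bound.

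The technical backbone is a stabilization (anti-oscillation) lemma. Since the boundary messages are frozen at $x_j=\pm1$ for all $n$ while the interior messages start at $0$, the first task is to show that, as $n$ grows, the set of edges carrying a $+1$ message and the set carrying a $-1$ message can only grow and then freeze---never flip sign---so the dynamics is effectively monotone. Oscillation is the usual obstruction to convergence of loopy BP, and the one-run assumption is exactly what excludes it: there is a single coherent interface between the positive and negative influence of the two runs, rather than several competing interfaces that could beat against one another. I would prove this monotonicity by induction on $n$, using that the update is the sign of a sum of three values and that a one-run boundary cannot create a site at which a message is simultaneously pushed toward $+1$ and $-1$ by stabilized neighbors.

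With monotonicity in hand, I would use the Forward and Backward Convergence Properties introduced in Section~\ref{sec:forbackstab} to turn it into a timed statement. Forward convergence lets me argue inductively that once the messages entering a site have stabilized at step $n$, the messages it emits stabilize by step $n+1$, so a front of correct values propagates one grid-step per iteration starting from $\partial B$; backward convergence certifies that each stabilized value is genuinely pinned to boundary data and cannot later change. Because the relevant boundary features for the ambiguous cut are the two endpoints of the run, which may lie on opposite sides of the grid, the worst-case distance the front must travel is of order $2N$; since the $N\times N$ grid has diameter $2N-2$, this yields convergence in $2N$ steps, paralleling the diameter-plus-one bound of Proposition~\ref{tree result}.

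The final step is to read off the limit and match it to Proposition~\ref{correctness oracle}, checking region by region that the four stabilized incoming messages at site $i$ sum to $\pm4$ on $\mathcal{I}^{\pm}\setminus\delta\mathcal{I}^{\pm}$, to $\pm2$ on $\delta\mathcal{I}^{\pm}$, and to $0$ on $\mathcal{O}^+\cap\mathcal{O}^-$. I expect the hardest part to be exactly this ambiguous band and its interface $\delta\mathcal{I}^{\pm}$: there the sign of a tied sum is $0$, and I must show these zero-valued messages are themselves stable---that a tie neither collapses to $\pm1$ nor spawns spurious zeros into neighboring regions---so that the readout is exactly $0$ or $\pm2$ rather than an oscillating or off-by-a-constant value. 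Controlling these ties along the single cut guaranteed by the one-run hypothesis is the crux of the argument, and is where I would spend most of the effort.
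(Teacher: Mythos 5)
Your proposal's technical backbone---the anti-oscillation lemma asserting that the sets of edges carrying $+1$ and $-1$ messages only grow and freeze, never flip---is false, and it fails precisely in the one-run setting you invoke to justify it. Concretely, take the case where the positive run $R^+$ sits strictly inside the left side of the boundary, occupying column $a=0$, rows $\beta_1$ through $\beta_2$ (so no grid corner has two positive neighbors, the case $|C|=0$ in the paper's proof). Consider the message sent north from site $(1,\beta_1)$. At time $n=1$ its three inputs are the frozen boundary message $m_{\mathcal{E}(0,\beta_1)}=+1$ and two interior messages still at their initialization value $0$, so $m^1_{\mathcal{N}(1,\beta_1)}=\sign(1+0+0)=+1$. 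Yet its limiting value is $-1$: in this case $\mathcal{I}^+=\emptyset$, the column-$1$ sites adjacent to the run lie in $\delta\mathcal{I}^-$ with $o^*=-2$ by Proposition \ref{correctness oracle}, and the paper's analysis shows \emph{every} message sent from an interior site converges to $-1$, the readout $-2$ being accounted for entirely by the $+1$ boundary message from the west. So messages near the smaller run first swing toward $+1$ and are later overrun by the negative influence propagating around the grid: the dynamics is genuinely non-monotone, and an induction on $n$ built on monotonicity cannot be repaired by the one-run hypothesis. Relatedly, your reading of forward/backward convergence as ``inputs stabilized at step $n$ implies outputs stabilized at step $n+1$, sweeping inward from $\partial B$'' cannot work as stated, because an emitted message always depends on an input coming from deeper inside the grid, which has not yet stabilized.

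What the paper uses instead of monotonicity is sign domination in the degree-$4$ grid: each update is the sign of a sum of exactly three inputs, so if two of them are already pinned to a common value $\sigma$, the output is pinned to $\sigma$ \emph{regardless} of the third, possibly still oscillating, input. The FC Lemma \ref{forward conv} runs this observation diagonally across a rectangle whose received messages from two orthogonal directions are pinned (the front sweeps from a corner of the rectangle, not inward from the whole boundary), and the BC Lemma \ref{lemma:backstab} then pins the remaining fourth direction on the intersection of two such rectangles sharing exactly one direction. The proof of Theorem \ref{main theorem} is then a case analysis on the number $|C|\in\{0,1,2\}$ of grid corners having two positive boundary neighbors; in each case one exhibits explicit FC/BC tuples whose hypotheses are verified directly from the boundary data, and reads off the four incoming messages region by region. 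Note also that your anticipated crux---stability of zero-valued (tied) messages along the ambiguous band---does not arise: in the limit the paper pins every message to $\pm 1$, and the value $0$ appears only in the readouts $\hat o_i$ on $\mathcal{O}^+\cap\mathcal{O}^-$, as the sum of two $+1$'s (from the run side) and two $-1$'s (from the opposite side). Your final region-by-region matching step does agree with the paper, but without replacing the false monotonicity lemma by a domination-type argument the proof does not go through.
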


We note that the diameter of $B$ is $2N-2$ rather than $2N$. Thus, Theorem \ref{main theorem} parallels Proposition \ref{tree result}.

\section{Forward and backward convergence} \label{sec:forbackstab}

In this section, we present the two technical lemmas, the Forward and Backward Convergence Lemmas, that allow us to prove the convergence of the difference messages in the setting of Section \ref{sec:setting of the paper}. The main idea is that convergence of messages on certain rectangular subsets of $B$ takes place in two phases, forward from the corners of the graph, and then {backward towards the boundary}.

Recall that we are in the setting of Section \ref{sec:setting of the paper}. We identify $V$ with the point set $\{0,1,\dots, N\}^2$ in the usual way, i.e., $(0,0)$ is the bottom-left corner, $(0,N+1)$ is the top-left corner, and so on. We represent nodes in $V$ in two ways. The first way uses $i,j,k,l, m$ (with possible subscripts) to represent a vertex in a coordinate-free way, e.g., $i_1,i_2 \in V$. The second way uses $(a,b),(\alpha,\beta)$ (again, with possible subscripts) to represent a vertex in coordinates, e.g.\ $(a,b) \in V$ where $a, b \in \{0,1,\dots,N+1\}$. At times, we will say ``let $i=(a,b) \in V$ be a vertex" to simultaneously refer to both representations.

Define $\vec{G} = (V,\vec{E})$ to be the directed graph such that the vertex set of $\vec{G}$ and $G = (V,E)$ are the same, and for each undirected edge $\{i,j\} \in E$, there are exactly two directed edges $i \to j$ and $j\to i$ in $\vec{E}$. Let 
$\{\mathcal{N}, \mathcal{S}, \mathcal{E}, \mathcal{W}\}$ be the set of \emph{directions} north, south, east, and west, respectively. Define corresponding direction vectors $v(\mathcal N) = (0,1)$, $v(\mathcal S) = (0,-1)$, $v(\mathcal E) = (1,0)$ and $v(\mathcal W) = (-1,0)$. For each $(a,b) \in V$ and direction $\mathcal{D}$, if $(a,b) + v(\mathcal D) \in V$, where $+$ is just the usual vector summation, then $(a,b) + v(\mathcal D)$ is said to be the \emph{$\mathcal D$ neighbor} of $(a,b)$. For example, $(a+1,b)$ is the eastern neighbor of $(a,b)$. Given a node $(a,b)\in V$ and a direction $\mathcal{D}$ such that $(a,b) + v(\mathcal D) \in V$, define $ \mathcal D(a,b) \in \vec{E}$ as $ \mathcal D(a,b) := (a,b) \to (a,b) + v(\mathcal D).$ 
For example, $\mathcal E(a,b) = (a,b) \to (a+1,b)$. 

For a message $m^n_{j\to i}$ as defined in Lemma \ref{difference message lemma}, we often drop the superscript and simply write $m_{j\to i}$ when the time index is not of concern. With this notation, $m_{\mathcal E(a,b)}$ denotes the message from $(a,b)$ to its eastern neighbors and so on. Now, given a subset $S \subseteq B$ and a direction $\mathcal{D}$, a {message} \emph{received by $S$ from direction $\mathcal{D}$} is a {message} of the form $m_{\mathcal{D}(a,b)}$ for some $(a,b) \not\in S$ such that $(a,b) + v(\mathcal{D}) \in S$. A message \emph{sent from $S$ in the direction $\mathcal{D}$} is a {message} of the form $m_{\mathcal{D}(a,b)}$ for some $(a,b) \in S$. Note that according to these defintion, a message $m_{\mathcal{D}(a,b)}$ sent from $S$ is not considered to be received by $S$ even if $(a,b) + v(\mathcal D)$ is in $S$. See Figure \ref{figure: FC and BC} which illustrates the messages sent from and received by $S$ where $S$ is the set of blue nodes.

\iftoggle{full_version}{}{
\begin{figure}[H]
\centering
\begin{subfigure}[b]{0.15\textwidth}
\includegraphics[width=1\textwidth]{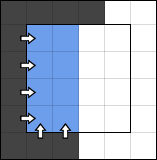}
\end{subfigure}
\begin{subfigure}[b]{0.15\textwidth}
\includegraphics[width=1\textwidth]{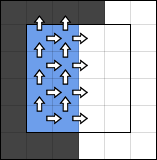}
\end{subfigure}
\begin{subfigure}[b]{0.15\textwidth}
\includegraphics[width=1\textwidth]{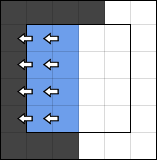}
\end{subfigure}
\caption{Left: messages received by $R_{(1,1)}^{(2,4)}$ from the direction $\mathcal{N}$ and $\mathcal{E}$. Middle: convergent messages due to Lemma \ref{forward conv}. Right: convergent messages due to Lemma \ref{lemma:backstab}.}\label{figure: FC and BC}
\end{figure}
}

The sets $S$ for which we are interested in messages received by and sent from are rectangular subsets of $B$ defined by the boundary runs. Let $i_1 = (\alpha_1,\beta_1)$ and $i_2 = (\alpha_2,\beta_2) \in B$. Define the \emph{rectangle $R_{i_1}^{i_2}$ with corners at $i_1$ and $i_2$} by
\begin{align*} R_{i_1}^{i_2} := \{(a,b) \in B : &\min(\alpha_1,\alpha_2) \le a \le \max(\alpha_1,\alpha_2),\\
&\min(\beta_1,\beta_2) \le b \le \max(\beta_1,\beta_2)\}.
\end{align*}

\iftoggle{full_version}{
For $D \in \mathbb{N}$, define the \emph{cut-rectangle} $R_{i_1}^{i_2}(D)$ of nodes of distance $D-1$ from the corner $i_1$ as
$$R_{i_1}^{i_2}(D):= \{(a,b) \in R_{i_1}^{i_2} :
|a- \alpha_1|+|b-\beta_1| \le D-1\}.
$$

Define the \emph{L-shaped region} $L_{i_1}^{i_2}$ with corner at $i_1$ to be
\begin{align*} L_{i_1}^{i_2} := & \{(\alpha_1,b) \in B :\min(\beta_1,\beta_2) \le b \le \max(\beta_1,\beta_2)\}\\
&\cup \{(a, \beta_1) \in B :\min(\alpha_1,\alpha_2) \le a \le \max(\alpha_1,\alpha_2)\}.
\end{align*}

See Figure \ref{rectangle examples} for examples.

\begin{figure}[H]
\centering
    \includegraphics[width=0.5\textwidth]{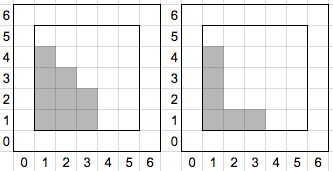}
    \caption{Left: $R_{(1,1)}^{(3,4)}(4)$. Right: $L_{(1,1)}^{(3,4)}$.}\label{rectangle examples}
\end{figure}
}

We say the \emph{unordered} pair of directions $(\mathcal{D}_1,\mathcal{D}_2)$ is \emph{adjacent} if $v(\mathcal D_1)$ and $v(\mathcal D_2)$ are orthogonal. The set of adjacent pairs of directions is \iftoggle{full_version}{$$\mathcal A := \{ (\mathcal{E},\mathcal{N}), (\mathcal{E},\mathcal{S}),
(\mathcal{W},\mathcal{N}), (\mathcal{W},\mathcal{S})\}.$$}{$\mathcal A := \{ (\mathcal{E},\mathcal{N}), (\mathcal{E},\mathcal{S}),
(\mathcal{W},\mathcal{N}), (\mathcal{W},\mathcal{S})\}.$} For two elements $i_1,i_2 \in B$, we use the notation $i_1 \triangleright i_2$ to denote an \emph{ordered} pair of vertices which are not necessarily neighbors.
\begin{definition}[\textbf{Compatible tuples}]Let $i_1=(\alpha_1,\beta_1),\, i_2=(\alpha_2,\beta_2) \in B$.
We say that the tuple $(i_1 \triangleright i_2, \mathcal D_1, \mathcal D_2)$ is \emph{compatible} if $(\mathcal D_1, \mathcal D_2) \in \mathcal A$ and there exists non-negative coefficients $c_1,c_2$ such that
$c_1 v(\mathcal D_1) + c_2 v(\mathcal D_2) = (\alpha_2-\alpha_1, \beta_2-\beta_1)$.
\end{definition}
The ordering of $i_1 \triangleright i_2$ is crucial because if $i_1 \ne i_2$, then $(i_1 \triangleright i_2, \mathcal D_1, \mathcal D_2)$ is compatible implies  $(i_2 \triangleright i_1, \mathcal D_1, \mathcal D_2)$ is \emph{not} compatible.  However, the ordering of $\mathcal{D}_1$ and $\mathcal{D}_2$ is irrelevant.



 \begin{definition}[\textbf{Convergence of messages}]Let $\sigma =\pm 1$ and $N \in \mathbb{N}_0$. For a given directed edge $j\to i \in \vec{E}$, we say that the messages $m^n_{j \to i}$  \emph{converges in $N$ iterations to} $\sigma$ if $m^n_{j\to i} = \sigma$ for all $n \ge N$.\end{definition}
 

The definition below is the salient feature of the one-run boundary configuration underlying the proof of our main result

\begin{definition}[\textbf{Forward Convergence (FC) Property}]\label{definition: FCP}
The compatible tuple $(i_1\triangleright i_2, \mathcal D_1, \mathcal D_2)$ is \emph{forward convergent to $\sigma=\pm 1$ at time $n_0 \in \mathbb{N}_0$}, abbreviated as $\mathrm{FC}(\sigma, n_0)$, if each message received by $R_{i_1}^{i_2}$ from directions $\mathcal{D}_1$ and $\mathcal{D}_2$ converge in  $n_0$ iterations to $\sigma$.
\end{definition}


We will refer to the following as the \emph{FC Lemma}.

\begin{lemma}[\textbf{Forward Convergence}]\label{forward conv}Let $\sigma =\pm 1$ and $n_0 \in \mathbb{N}_0$. Suppose that $(i_1\triangleright i_2, \mathcal D_1, \mathcal D_2)$ is $\mathrm{FC}(\sigma,n_0)$. Then all messages sent from $R_{i_1}^{i_2}$ in the directions $\mathcal{D}_1$ and $\mathcal{D}_2$ converge in $(n_0+2N-1)$ iterations to $\sigma$.
\end{lemma}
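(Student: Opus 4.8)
By the symmetry of the grid and of the update in Lemma~\ref{difference message lemma} under the four reflections, it suffices to treat one adjacent pair; I would take $(\mathcal D_1,\mathcal D_2)=(\mathcal E,\mathcal N)$. Compatibility of $(i_1\triangleright i_2,\mathcal E,\mathcal N)$ forces $\alpha_2\ge\alpha_1$ and $\beta_2\ge\beta_1$, so $i_1=(\alpha_1,\beta_1)$ is the southwest corner of $R_{i_1}^{i_2}$ and $i_2$ the northeast corner, and the messages of interest flow from $i_1$ toward $i_2$. The plan is to induct on the $\ell_1$-distance $d(a,b):=(a-\alpha_1)+(b-\beta_1)$ from the corner $i_1$, proving the claim that for every $(a,b)\in R_{i_1}^{i_2}$ the two outgoing forward messages $m_{\mathcal E(a,b)}$ and $m_{\mathcal N(a,b)}$ equal $\sigma$ for all $n\ge n_0+d(a,b)+1$.

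\textbf{The key local fact.} Writing out the update of Lemma~\ref{difference message lemma} at a node $(a,b)$ for the two forward directions gives
$$m^n_{\mathcal E(a,b)}=\sign\bigl\{\,m^{n-1}_{\mathcal E(a-1,b)}+m^{n-1}_{\mathcal N(a,b-1)}+m^{n-1}_{\mathcal S(a,b+1)}\,\bigr\},$$
$$m^n_{\mathcal N(a,b)}=\sign\bigl\{\,m^{n-1}_{\mathcal E(a-1,b)}+m^{n-1}_{\mathcal N(a,b-1)}+m^{n-1}_{\mathcal W(a+1,b)}\,\bigr\}.$$
The crucial observation is that \emph{both} outputs depend on the same two ``forward-incoming'' messages, $m_{\mathcal E(a-1,b)}$ (from the west neighbour) and $m_{\mathcal N(a,b-1)}$ (from the south neighbour), together with a single counter-propagating message that differs between the two. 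Since every difference message lies in $\{-1,0,1\}$, whenever the two forward-incoming messages both equal $\sigma$ the summand inside each $\sign$ is at least $\sigma\cdot 2$ minus a term of magnitude at most $1$, hence has sign $\sigma$ regardless of the third message. Thus $m^{n-1}_{\mathcal E(a-1,b)}=m^{n-1}_{\mathcal N(a,b-1)}=\sigma$ implies $m^n_{\mathcal E(a,b)}=m^n_{\mathcal N(a,b)}=\sigma$. This threshold argument, which lets us discard the message travelling against the flow, is the heart of the proof.

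\textbf{Base case and induction.} For the corner $i_1$ the west and south neighbours both lie outside $R_{i_1}^{i_2}$, so $m_{\mathcal E(\alpha_1-1,\beta_1)}$ and $m_{\mathcal N(\alpha_1,\beta_1-1)}$ are messages received by $R_{i_1}^{i_2}$ from directions $\mathcal E$ and $\mathcal N$; by $\mathrm{FC}(\sigma,n_0)$ they equal $\sigma$ for $n\ge n_0$, so the local fact gives the claim at $d=0$ for $n\ge n_0+1$. For the inductive step at a node with $d\ge 1$, each of its west and south neighbours is either inside $R_{i_1}^{i_2}$, in which case the relevant forward-incoming message is that neighbour's outgoing forward message at distance $d-1$ (converged to $\sigma$ for $n\ge n_0+d$ by the hypothesis), or it lies just outside on the west/south edge of the rectangle, in which case that message is received by $R_{i_1}^{i_2}$ from $\mathcal E$ or $\mathcal N$ and equals $\sigma$ for $n\ge n_0$ by the FC property. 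Either way both forward-incoming messages equal $\sigma$ by time $n_0+d$, and the local fact advances convergence of both outputs to $n\ge n_0+d+1$. Since $R_{i_1}^{i_2}\subseteq B$ has coordinates in $\{1,\dots,N\}$, the maximal value of $d$ is $(N-1)+(N-1)=2N-2$, so every message sent from $R_{i_1}^{i_2}$ in directions $\mathcal E$ or $\mathcal N$ has converged to $\sigma$ by $n_0+(2N-2)+1=n_0+2N-1$, as required.

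\textbf{Anticipated obstacle.} The routine parts are the distance bookkeeping and the four symmetric direction cases. The one genuinely load-bearing step is the threshold observation of the second paragraph: it is what makes the recursion closed in the two forward directions despite the presence of a counter-propagating message, and it relies essentially on the $\{-1,0,1\}$-valuedness established in Lemma~\ref{difference message lemma}. I would take care to verify precisely which incoming messages at each node are governed by the FC hypothesis versus the inductive hypothesis, since the timing bound $n_0+d+1$ hinges on this classification being correct on the west and south edges of the rectangle.
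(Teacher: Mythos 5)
Your proposal is correct and follows essentially the same route as the paper: the paper proves the lemma via an auxiliary cut-rectangle lemma that inducts on the $\ell_1$-distance $D$ from the corner $i_1$, using exactly your key observation that once the two forward-incoming messages (from the west and south neighbors) equal $\sigma$, the counter-propagating third message cannot affect the sign, with the same split into boundary-condition cases ($a=\alpha_1$ or $b=\beta_1$) versus induction-hypothesis cases, and the same timing bound $n_0+d+1 \le n_0+2N-1$. Your single direct induction on $d(a,b)$ is just a repackaging of the paper's induction on the cut-rectangles $R_{i_1}^{i_2}(D)$.
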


\iftoggle{full_version}{
\begin{proof}
Letting $D = 2N-1$ in Lemma \ref{forward conv lemma} (proven next), then $R_{i_1}^{i_2} = R_{i_1}^{i_2}(D)$. Thus, we have the desired result.
\end{proof}
}

\iftoggle{full_version}{
Lemma \ref{forward conv} is essentially a corollary of the lemma below.

\begin{lemma}\label{forward conv lemma}
Let $\sigma =\pm 1$ and $n_0 \in \mathbb{N}_0$. Suppose $(i_1\triangleright i_2,\mathcal D_1, \mathcal D_2)$ is $\mathrm{FC}(\sigma,n_0)$. Then for all $D \in \mathbb{N}$, all edges sent from $R_{i_1}^{i_2}(D)$ in the directions $\mathcal{D}_1$ and $\mathcal{D}_2$ converge in $(n_0+D)$ iteration to $\sigma$.
\end{lemma}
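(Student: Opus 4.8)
The plan is to prove Lemma \ref{forward conv lemma} by induction on $D$, after a symmetry reduction. The four adjacent direction pairs in $\mathcal{A}$ are interchanged by reflections of the grid that preserve the message recursion, so it suffices to treat one representative. I take $\mathcal{D}_1 = \mathcal{E}$ and $\mathcal{D}_2 = \mathcal{N}$; compatibility of $(i_1\triangleright i_2,\mathcal{E},\mathcal{N})$ then forces $i_1=(\alpha_1,\beta_1)$ to be the south-west corner of $R_{i_1}^{i_2}$ (i.e.\ $\alpha_1\le\alpha_2$, $\beta_1\le\beta_2$), and $R_{i_1}^{i_2}(D)$ is the set of nodes $(a,b)$ of the rectangle with $(a-\alpha_1)+(b-\beta_1)\le D-1$. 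Under this convention the messages received by $R_{i_1}^{i_2}$ from $\mathcal{D}_1=\mathcal{E}$ are exactly those entering the rectangle through its western side, and those received from $\mathcal{D}_2=\mathcal{N}$ enter through its southern side; by the $\mathrm{FC}(\sigma,n_0)$ hypothesis all of these equal $\sigma$ for $n\ge n_0$.

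The engine of the argument is a single domination observation for the min-sum recursion. Every node of $B$ has four neighbours, so each outgoing difference message is the $\sign$ of a sum of exactly three incoming difference messages, each of which lies in $\{-1,0,1\}$. Hence if at time $n-1$ two of these three incoming messages already equal $\sigma=\pm1$, the sum equals $2\sigma+t$ with $t\in\{-1,0,1\}$, and $\sign(2\sigma+t)=\sigma$ because $2>1$; therefore the outgoing message equals $\sigma$ at time $n$. The whole proof amounts to checking that, for the eastward and northward messages sent from a node of $R_{i_1}^{i_2}(D)$, the two ``corner-facing'' incoming messages (from the western and southern neighbours) are pinned to $\sigma$ soon enough, while the single remaining ``outward-facing'' incoming message is merely bounded.

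For the base case $D=1$ we have $R_{i_1}^{i_2}(1)=\{i_1\}$. The eastward message $m^n_{\mathcal{E}(\alpha_1,\beta_1)}$ is the $\sign$ of the sum of the messages into $i_1$ from its western, southern, and northern neighbours. The western neighbour lies strictly west of $R_{i_1}^{i_2}$ and the southern neighbour strictly south, so these two incoming messages are received by $R_{i_1}^{i_2}$ from $\mathcal{E}$ and $\mathcal{N}$ respectively and equal $\sigma$ for $n-1\ge n_0$ by the FC hypothesis, whereas the northern message is only bounded. Domination gives $m^n_{\mathcal{E}(\alpha_1,\beta_1)}=\sigma$ for $n\ge n_0+1$, and the northward message $m^n_{\mathcal{N}(\alpha_1,\beta_1)}$ is handled identically, its two corner-facing inputs again being the western and southern neighbours. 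This is convergence in $n_0+1=n_0+D$ iterations. For the inductive step, assume all messages sent from $R_{i_1}^{i_2}(D)$ in directions $\mathcal{E},\mathcal{N}$ equal $\sigma$ for $n\ge n_0+D$; the messages sent from $R_{i_1}^{i_2}(D)$ itself are thus already covered, so it remains to treat a node $(a,b)$ of the new diagonal layer, with $(a-\alpha_1)+(b-\beta_1)=D$. For its eastward output, the western and southern incoming messages are the corner-facing ones: if the western neighbour still lies in the rectangle then it belongs to $R_{i_1}^{i_2}(D)$ (its distance to $i_1$ is $D-1$) and sends eastward, hence equals $\sigma$ for $n-1\ge n_0+D$ by the inductive hypothesis; if instead $(a,b)$ sits on the western edge of $R_{i_1}^{i_2}$, the western neighbour lies outside the rectangle and the message is received from $\mathcal{E}$, hence equals $\sigma$ for $n-1\ge n_0$ by FC. The southern incoming message is controlled symmetrically. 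The third, northward-facing, incoming message is only bounded, so domination yields $m^n_{\mathcal{E}(a,b)}=\sigma$ for $n\ge n_0+D+1$; the northward message sent from $(a,b)$ is treated the same way. This gives convergence in $n_0+(D+1)$ iterations and closes the induction.

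The main obstacle is the case bookkeeping in the inductive step: one must verify that, for both the eastward and the northward output at every node of the new layer — including nodes lying on the western or southern edge of the rectangle, and including the possibility that an outward-facing neighbour is a boundary site sending a $\pm1$ message — it is always exactly the two corner-facing incoming messages that are pinned to $\sigma$ (via either FC or the inductive hypothesis) and exactly one incoming message that is left uncontrolled but bounded. Once this classification is established the arithmetic $\sign(2\sigma+t)=\sigma$ does all the remaining work, and the additive $+1$ in the iteration count per layer is precisely the one propagation step of the recursion.
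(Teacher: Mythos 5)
Your proof is correct and follows essentially the same route as the paper's: after the same symmetry reduction to $(\mathcal{D}_1,\mathcal{D}_2)=(\mathcal{E},\mathcal{N})$ with $i_1$ the south-west corner, you induct on $D$ layer by layer, and your domination observation $\sign(2\sigma+t)=\sigma$ for $t\in\{-1,0,1\}$ is exactly the paper's remark that once the two corner-facing incoming messages equal $\sigma$, the third term inside the $\sign$ cannot influence the output. Your case split (western/southern neighbour inside the rectangle versus on its edge) matches the paper's cases $a>1,\ b>1$, $a=1$, and $b=1$, so there is nothing to add.
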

\begin{proof}
For $\ell=1,2$, let $i_\ell = (\alpha_\ell,\beta_\ell)$ be the coordinate representation of $i_\ell$. Since $V$ is embedded in $\mathbb{R}^2$, rotations and reflections on $\mathbb{R}^2$ that preserve $V$ induce graph automorphisms on $G= (V,E)$ that respects the adjacency of directions. Thus, because the message-passing dynamics defined in Lemma \ref{difference message lemma} is isotropic, we can assume $\alpha_1 \le \alpha_2$, $\beta_1 \le \beta_2$ and $\mathcal{D}_1 = \mathcal E$ and $\mathcal D_2 = \mathcal N$  after applying appropriate rotations and reflections. Relabeling $V$ using the rule $(a,b) \mapsto (a-\alpha_1+1, b - \beta_1+1)$, we may further assume $\alpha_1 = \beta_1 = 1$. Relabeling time, we may assume $n_0 = 0$. The dynamics defined in Lemma \ref{difference message lemma} is invariant under multiplication by $-1$, i.e., we may replace every instance of $m^n_{j\to i}$ by $-m^n_{j\to i}$. Hence, we may assume $\sigma = 1$. Let $(\alpha, \beta) = (\alpha_2,\beta_2)$.

Given the reductions in the preceding paragraph, we now have only to prove that for each $(a,b) \in R_{(1,1)}^{(\alpha,\beta)}(D)$ such that $a \le \alpha$ and $b \le \beta$, the messages $m^n_{\mathcal E(a,b)}$ and $m^n_{\mathcal N(a,b)}$ converge in $D$ iterations to $1$. In other words, for all $n \ge D$
\begin{align}
    \label{FC: east}
    m^n_{\mathcal E(a,b)} := \sign \{m^{n-1}_{\mathcal E(a-1,b)} +m^{n-1}_{\mathcal N(a,b-1)} + m^{n-1}_{\mathcal S(a,b+1)} \} =1\\
    \label{FC: north}
    m^n_{\mathcal N(a,b)} := \sign \{m^{n-1}_{\mathcal E(a-1,b)} + m^{n-1}_{\mathcal N(a,b-1)}+m^{n-1}_{\mathcal W(a+1,b)}\} =1
\end{align}
In order to prove (\ref{FC: east}) and (\ref{FC: north}), it suffices to show 

\begin{equation}\label{forward convergence lemma: to be checked}m^{n-1}_{\mathcal E(a-1,b)}=m^{n-1}_{\mathcal N(a,b-1)} = 1
\end{equation} 
because in this case the value of the third term inside the $\sign$ in (\ref{FC: east}) and (\ref{FC: north}) above cannot influence the message.

The boundary conditions part of the definition of property $\mathrm{FC}(1,0)$ translates to the fact that 
\begin{equation}\label{forward conv lemma: boundary condition}m^n_{\mathcal E(0,b)} = m^n_{\mathcal N(a,0)} = 1, \, \forall n\ge 0\end{equation}
for all $1 \le a \le \alpha$ and $1 \le b \le \beta$.

We proceed by induction on $D$. For the base case, $D = 1$ and $R_{(1,1)}^{(\alpha,\beta)}(D) = \{(1,1)\}$. Let $n \ge 1$. Observe that (\ref{forward convergence lemma: to be checked}) follows from (\ref{forward conv lemma: boundary condition}). Thus, (\ref{FC: east}) and (\ref{FC: north}) are proven for $D = 1$ and $n \ge D$.

Now, let $D > 1$ and suppose the conclusion of Lemma \ref{forward conv lemma} holds for $D -1$. Let $(a,b) \in R_{(1,1)}^{(\alpha,\beta)}(D)$ and $n \ge D$. If $(a,b) \in R_{(1,1)}^{(\alpha,\beta)}(D-1)$, then by the induction hypothesis, we're done. Thus, below, we assume $(a,b) \in R_{(1,1)}^{(\alpha,\beta)}(D) \setminus R_{(1,1)}^{(\alpha,\beta)}(D-1)$, that is, $|a-1|+|b-1| = D-1$. Our goal as before is to show (\ref{forward convergence lemma: to be checked}).

First, consider the case that $a > 1$ and $b > 1$. Then $(a-1,b), (a,b-1) \in R_{(1,1)}^{(\alpha,\beta)}(D-1)$ and so by the induction hypothesis $m^{n-1}_{\mathcal E(a-1,b)} =m^{n-1}_{\mathcal N(a,b-1)} =1$ and so (\ref{forward convergence lemma: to be checked}) holds.

Next, suppose $a = 1$. Then $|b-1| = D-1>0$ implies that $b > 1$. From the boundary condition (\ref{forward conv lemma: boundary condition}), we have $m^{n-1}_{\mathcal E(a-1,b)}  = 1$. On the other hand, $(a,b-1) \in R_{(1,1)}^{(\alpha,\beta)}(D-1)$ and so $m^{n-1}_{\mathcal N(a,b-1)} = 1$, which proves (\ref{forward convergence lemma: to be checked}).

The case when $b = 1$ is analogous to the above argument. This proves the induction step and the lemma.
\end{proof}

}

%

The following definition builds on the notion of the FC property defined previously.

%

\begin{definition}[\textbf{Backward Convergence (BC) Property}]
Two compatible tuples $(i_1\triangleright i_2, \mathcal D_1, \mathcal D_2)$ and $(i_3\triangleright i_4,\mathcal D_3, \mathcal D_4)$ are said to be \emph{backward convergent to $\sigma \in \pm 1$ at time $n_0 \in \mathbf N_0$}, abbreviated as $\mathrm{BC}(\sigma, n_0)$, if both tuples are $\mathrm{FC}(\sigma, n_0)$ and there is exactly one direction in common, i.e., $|\{\mathcal D_1, \mathcal D_2\} \cap \{\mathcal D_3, \mathcal D_4\}| = 1$. The unique element $\cal{D}$ \emph{not} in $\{\mathcal D_1, \mathcal D_2\} \cup \{\mathcal D_3, \mathcal D_4\}$ is called the \emph{backward convergence (BC) direction}.
\end{definition}

\begin{lemma}[\textbf{Backward Convergence}]\label{lemma:backstab}
Let $\sigma =\pm 1$ and $n_0 \in \mathbb{N}_0$. Suppose $(i_1\triangleright i_2, \mathcal D_1, \mathcal D_2)$  and $(i_3\triangleright i_4,\mathcal D_3, \mathcal D_4)$ are $\mathrm{BC}(\sigma, n_0)$ and let $\mathcal{D}$ be the BC direction. Then all messages sent from $R_{i_1}^{i_2} \cap R_{i_3}^{i_4}$ in the direction $\mathcal{D}$ converge in $(n_0+2N)$ iterations to $\sigma$.
\end{lemma}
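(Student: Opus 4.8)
The plan is to mirror the symmetry reduction used in the proof of the FC Lemma (Lemma~\ref{forward conv}) and then exploit a ``two-out-of-three'' feature of the $\sign$ update of Lemma~\ref{difference message lemma}. Since the dynamics is isotropic and invariant under global negation, I would first apply an appropriate rotation/reflection and sign flip so that the shared direction is $\mathcal{E}$, the two non-shared directions are $\mathcal{N}$ (belonging to the first tuple) and $\mathcal{S}$ (belonging to the second), the BC direction becomes $\mathcal{D}=\mathcal{W}$, and $\sigma=1$. Writing $R_1=R_{i_1}^{i_2}$ and $R_2=R_{i_3}^{i_4}$, the westward message out of a node $(a,b)$ unrolls via Lemma~\ref{difference message lemma} as
$$m^n_{\mathcal{W}(a,b)} = \sign\{ m^{n-1}_{\mathcal{W}(a+1,b)} + m^{n-1}_{\mathcal{N}(a,b-1)} + m^{n-1}_{\mathcal{S}(a,b+1)} \}.$$
The collinear term $m^{n-1}_{\mathcal{W}(a+1,b)}$ always lies in $\{-1,0,1\}$, so $m^n_{\mathcal{W}(a,b)}=1$ as soon as the two perpendicular inputs $m^{n-1}_{\mathcal{N}(a,b-1)}$ and $m^{n-1}_{\mathcal{S}(a,b+1)}$ both equal $1$, regardless of the third term.

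The heart of the argument is to show that these two perpendicular inputs are already converged for every $(a,b)\in R_1\cap R_2$. Consider the northward input $m_{\mathcal{N}(a,b-1)}$ for a node $(a,b)\in R_1$. If the south neighbor $(a,b-1)$ lies in $R_1$, then $m_{\mathcal{N}(a,b-1)}$ is a message \emph{sent from} $R_1$ in direction $\mathcal{N}$ and converges to $1$ in $n_0+2N-1$ iterations by the FC Lemma; if instead $(a,b-1)\notin R_1$ (which happens only on the bottom row of $R_1$), then $(a,b-1)+v(\mathcal{N})=(a,b)\in R_1$ shows that $m_{\mathcal{N}(a,b-1)}$ is a message \emph{received by} $R_1$ from direction $\mathcal{N}$, hence converges to $1$ in $n_0$ iterations by the $\mathrm{FC}(1,n_0)$ hypothesis. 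Either way it equals $1$ for all $n-1\ge n_0+2N-1$. Applying the reflected version of this dichotomy ($\mathcal{N}\leftrightarrow\mathcal{S}$) to $R_2$ shows that the southward input $m_{\mathcal{S}(a,b+1)}$ into any $(a,b)\in R_2$ likewise equals $1$ for all $n-1\ge n_0+2N-1$.

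Combining the two, for any $(a,b)\in R_1\cap R_2$ both perpendicular inputs equal $1$ once $n-1\ge n_0+2N-1$, so by the two-out-of-three observation $m^n_{\mathcal{W}(a,b)}=1$ for all $n\ge n_0+2N$; this is exactly convergence of the messages sent from $R_1\cap R_2$ in direction $\mathcal{W}$. I expect the only delicate points to be bookkeeping: verifying that the canonical reduction always sends the BC direction to $\mathcal{W}$ (equivalently, that the BC direction is opposite the shared direction, since the two non-shared directions are forced to be an opposite pair), and correctly classifying each perpendicular input as ``sent from'' versus ``received by'' the relevant rectangle so that \emph{every} node of the overlap---including those on a rectangle's own border row---is covered by either the FC Lemma or the FC hypothesis. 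Unlike the FC Lemma, no induction on distance is required here: the single extra $\sign$ step that turns two converged perpendicular messages into a converged westward message is what accounts for the $+1$ over the $n_0+2N-1$ bound, and ultimately for the $2N$ (versus diameter $2N-2$) in Theorem~\ref{main theorem}.
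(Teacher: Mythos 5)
Your proposal is correct and follows essentially the same route as the paper's proof: the same WLOG reduction to $\mathcal{D}_1=\mathcal{N}$, $\mathcal{D}_2=\mathcal{D}_4=\mathcal{E}$, $\mathcal{D}_3=\mathcal{S}$, $\mathcal{D}=\mathcal{W}$, $\sigma=1$, the same ``two perpendicular converged inputs dominate the third term'' observation, and the same dichotomy for each perpendicular input (sent from the rectangle, handled by the FC Lemma, versus received by it, handled by the $\mathrm{FC}(\sigma,n_0)$ hypothesis). The only cosmetic difference is that the paper routes the received-by case through the L-shaped region $L_{i_1}^{i_2}$, whereas you invoke the definition of ``received by $R_{i_1}^{i_2}$'' directly, which is if anything cleaner.
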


\iftoggle{full_version}{
\begin{proof}
Let $i_\ell = (\alpha_\ell,\beta_\ell)$ for $\ell = 1,2,3,4$. Without the loss of generality, we consider the case when $\mathcal D_1 = \mathcal{N}, \mathcal D_2 = \mathcal D_4= \mathcal{E}$ and  $\mathcal D_3 = \mathcal{S}$. Hence, $\mathcal{D} = \mathcal{W}$. Furthermore, we assume $\sigma = 1$ and $n_0 = 0$. Now, let $(a,b) \in R_{(\alpha_1,\beta_1)}^{(\alpha_2,\beta_2)} \cap R_{(\alpha_3,\beta_3)}^{(\alpha_4,\beta_4)}$. Let $n \ge 2N$. Our goal is to show 
\begin{equation}\label{backward convergence lemma: main check}m^n_{\mathcal W(a,b)} := \sign \{m^{n-1}_{\mathcal N(a,b-1)} + m^{n-1}_{\mathcal S(a,b+1)} +m^{n-1}_{\mathcal W(a+1,b)}\} =1\\
\end{equation}
We first show $m^{n-1}_{\mathcal N(a,b-1)} = 1$. Now, if $(a,b-1) \in R_{(\alpha_1,\beta_1)}^{(\alpha_2,\beta_2)}$, then $m^{n-1}_{\mathcal N(a,b-1)} = 1$ by FC Lemma \ref{forward conv}. On the other hand, if $(a,b-1) \not\in R_{(\alpha_1,\beta_1)}^{(\alpha_2,\beta_2)}$, then $\min(\beta_1,\beta_2) = b$. Since $(\mathcal{E}, \mathcal{N})$ is compatible with $((\alpha_1,\beta_1),(\alpha_2,\beta_2))$ by assumption, we have $\beta_1 =\min(\beta_1,\beta_2)$. This shows $(a,b) =(a,\beta_1) \in L_{(\alpha_1,\beta_1)}^{(\alpha_2,\beta_2)}$. Next, since $(a,b-1) + v(\mathcal{N}) = (a,b) \in R_{(\alpha_1,\beta_1)}^{(\alpha_2,\beta_2)}$, $m^n_{\mathcal{N}(a,b-1)}$ is a message received by $L_{(\alpha_1,\beta_1)}^{(\alpha_2,\beta_2)}$ from the direction $\mathcal{N}$. Thus, the defining property of $((\alpha_1,\beta_1)\triangleright(\alpha_2,\beta_2), \mathcal N, \mathcal E)$ being $\mathrm{FC}(1, 0)$ to obtain  $m^{n-1}_{\mathcal N(a,b-1)} = 1$. 

An analogous argument shows $m^{n-1}_{\mathcal S(a,b+1)} = 1$. This proves (\ref{backward convergence lemma: main check}).
\end{proof}
}

See Figure \ref{figure: FC and BC} for an illustration of Lemma \ref{forward conv} and \ref{lemma:backstab}.

\iftoggle{full_version}{
\begin{figure}[H]
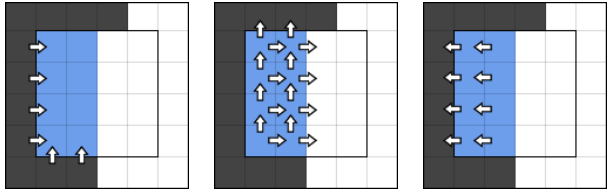

\centering
\begin{subfigure}[b]{0.2\textwidth}
\includegraphics[width=1\textwidth]{FCregion_received}
\end{subfigure}
~
\begin{subfigure}[b]{0.2\textwidth}
\includegraphics[width=1\textwidth]{FCregion_sent}
\end{subfigure}
~
\begin{subfigure}[b]{0.2\textwidth}
\includegraphics[width=1\textwidth]{BCregion_sent}
\end{subfigure}
\caption{Left: messages received by $R_{(1,1)}^{(2,4)}$ from the direction $\mathcal{N}$ and $\mathcal{E}$. Middle: convergent messages due to Lemma \ref{forward conv}. Right: convergent messages due to Lemma \ref{lemma:backstab}.}\label{figure: FC and BC}
\end{figure}
}



\section{Proof of Theorem \ref{main theorem}}\label{sec:proof of main theorem}

Let $\bfx_{\partial B}$ be a given one-run boundary. Below, we assume the algorithm has run for $2N$ iterations, i.e., $n\ge 2N$, so that we can use the FC and BC Lemma. The goal is to show that $\hat o^n_{(a,b)} = o^*_{(a,b)}$ for any $(a,b) \in B$ where $o^*_{(a,b)}$ is as in Proposition \ref{correctness oracle} and 
$$\hat o^n_{(a,b)} =  m^n_{\mathcal{E}(a-1,b)} 
+ m^n_{\mathcal{N}(a,b-1)}
+ m^n_{\mathcal{W}(a+1,b)}
+ m^n_{\mathcal{S}(a,b+1)}.
$$ Without the loss of generality, we assume  $\bfx_{\partial B}$ satisfies:

    
    \emph{1. Positive run is contracted: } Suppose $i,j,k \in \partial B$ are such that $j$ has degree $2$, $i$ and $k$ are the two neighbors of $j$, and $x_i \ne x_k$. In such cases, we always assume $x_j = -1$. This is because $j$ does not touch any nodes in $B$, so the value of $x_j$ does not affect the message-updates. 

     \emph{2. Positive run is smaller} i.e., $|R^+| \le |R^-|$.


A node in $B$ with two neighbors in $\partial B$ is called a \emph{corner}. For easier visualization, the four corners are given names: $sw = (1,1)$, $se = (N,1)$, $ne = (N,N)$ and $nw = (1,N)$.
Define $C$ to be the set of corners with two positive boundary neighbors, i.e., $$C = \{ i \in \{sw,se,ne,nw\} :  |\{j \in \partial B \cap \partial i \mid x_j = +1 \}| = 2\}$$
Notice that $|C| \le 2$ because otherwise $|R^+| > |R^-|$. Thus, $|C| \in \{0,1,2\}$ and the proof is correspondingly divided into the three cases. \iftoggle{full_version}{}{Due to the lack of space, we will only show the proof for $|C| = 1$.  For the full proof, we refer the reader to \cite{}[{\color{red}full arxiv paper}].}

Below, for brevity, we will write subsets of $B$ using probabilist notation, i.e.,  for a logical statement $S$, let $\{S\}$ be a shorthand for $\{(a,b) \in B : S\}$. For example, $\{(a,b) \in B : a = 1\}$ is simply written as $\{a=1\}$.

\iftoggle{full_version}{
Case $|C| = 0$. Let $(\alpha_1,\beta_1), (\alpha_2,\beta_2)$ be the two end points of $R^+$ such that $\beta_1 \le \beta_2$.
 Without loss of generality, we assume all the positive boundary conditions are restricted to the left side. More precisely, if $(a,b) \in \partial B$ and $x_{(a,b)} = 1$, then $a = 0$. See Figure \ref{fig:Case 0} for an example.

\begin{figure}[H]
\centering
\begin{subfigure}[t]{0.3\textwidth}
\includegraphics[width= 1 \textwidth]{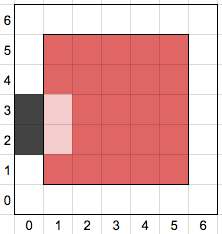}
\end{subfigure}
~
\begin{subfigure}[t]{0.15\textwidth}
\includegraphics[width=1\textwidth]{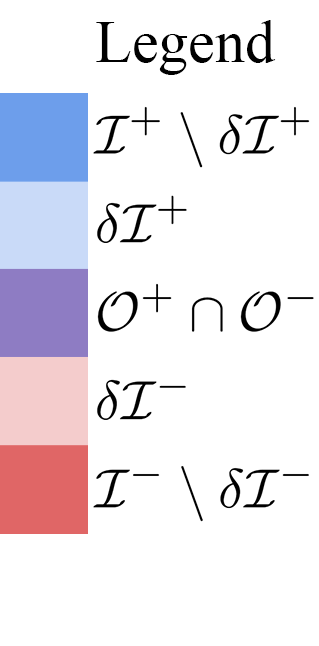}
\end{subfigure}
\caption{Example of case $C = \emptyset$ where $(\alpha_1,\beta_1) = (0,2), (\alpha_2,\beta_2)=(0,3)$.}\label{fig:Case 0}
\end{figure}

From the definitions, it is easy to see that
\begin{align*}
	&\mathcal{I}^+ \setminus \delta \mathcal{I}^+ =\delta \mathcal{I}^+ = \mathcal{O}^+ \cap \mathcal{O}^-= \emptyset\\
	&\delta \mathcal{I}^-  = \{a = 1,\,\beta_1 \le b \le  \beta_2\}\\
	&\mathcal{I}^- \setminus \delta \mathcal{I}^- = \{1< a\}\cup \{b > \beta_2\} \cup \{b < \beta_1\}
\end{align*}

Observe that $(ne\triangleright sw, \mathcal S, \mathcal W)$ and $(se\triangleright nw,\mathcal N, \mathcal W)$ are $BC(-1,0)$ with backward convergence direction $\mathcal{E}$. We will refer to this observation as $\dagger$. Thus, for all $(a,b) \in B$, we have $m^n_{\mathcal{N}(a,b-1)}
= m^n_{\mathcal{W}(a+1,b)}
= m^n_{\mathcal{S}(a,b+1)}
= -1$. We only have to analyze $m^n_{\mathcal{E}(a-1,b)}$.

Let $(a,b) \in \mathcal{I}^- \setminus \delta \mathcal{I}^-$. If $1<a$, then by $\dagger$, $m^n_{\mathcal{E}(a-1,b)} = -1$ and so $\hat o^n_{(a,b)} = -4$. If $a = 1$, then $m^n_{\mathcal{E}(a-1,b)} = -1$ by the boundary condition. So in both cases, $\hat o^n_{(a,b)} = -4$.

Next, consider $(a,b) \in \delta \mathcal{I}^-$, then $m^n_{\mathcal{E}(a-1,b)} = 1$ by the boundary condition. Hence, $\hat o^n_{(a,b)} = -2$.
}



\begin{figure}[H]
\centering
\begin{subfigure}[t]{0.3\textwidth}
\includegraphics[width=1\textwidth]{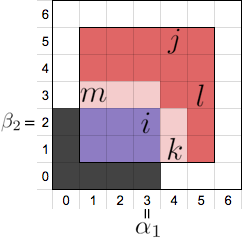}
\end{subfigure}
~
\begin{subfigure}[t]{0.15\textwidth}
\includegraphics[width=1\textwidth]{legend}
\end{subfigure}
    \caption{$|C| = 1$ example, $(\alpha_1,\beta_1) = (3,0), (\alpha_2,\beta_2)=(0,2)$.}
    \label{fig:corner1}
\end{figure}
\iftoggle{full_version}{
Case $|C| = 1$: 
}

Without loss of generality, let $C = \{sw\}$ and let $(\alpha_1,\beta_1)$ and $(\alpha_2,\beta_2)$ be the two endpoints of $R^+$ such that $\beta_1 \le \beta_2$.

One checks that $\mathcal{I}^+ \setminus \delta \mathcal{I}^+ =\delta \mathcal{I}^+ = \emptyset$ and
\begin{align*}
& \mathcal{O}^+ \cap \mathcal{O}^-= \{a \le \alpha_1, \, b \le \beta_2\}\\
&\delta \mathcal{I}^-  = \{a = \alpha_1+1, \, b \le  \beta_2\}\cup \{a \le \alpha_1, b=\beta_2 +1\}\\
&\mathcal{I}^- \setminus  \delta \mathcal{I}^-= \{a > \alpha_1 +1\}\cup \{b > \beta_2+1\} \\
&\qquad\qquad \qquad  \cup\{a=\alpha_1+1, b=\beta_2+1\}
\end{align*}
Let $i=(\alpha_1,\beta_2)$, $j=(\alpha_1+1,N)$, $k=(\alpha_1+1,1)$, $l=(N,\beta_2+1)$, and $m = (1,\beta_2+1)$. See Figure \ref{fig:corner1}. We have the following observations
\begin{enumerate}[label=\Roman*]
\item\label{case 2: lower left} $(sw\triangleright i, \mathcal E, \mathcal N )$ is $\mathrm{FC}(1,0)$,
\item\label{case 2: upper right} $(ne\triangleright sw, \mathcal W, \mathcal S )$ is $\mathrm{FC}(-1,0)$,
\item\label{case 2: right} $(se\triangleright j, \mathcal N, \mathcal W)$  and $(ne\triangleright k, \mathcal S, \mathcal W)$ are $\mathrm{BC}(-1, 0)$,
\item\label{case 2: top} $(nw\triangleright l, \mathcal E, \mathcal S)$  and $(ne\triangleright m, \mathcal S, \mathcal W)$ are $\mathrm{BC}(-1, 0)$.
\end{enumerate}


If $(a,b) \in \mathcal{O}^+ \cap \mathcal{O}^-$, then $m^n_{\mathcal{E}(a-1,b)} = m^n_{\mathcal{N}(a,b-1)} = 1$ by \ref{case 2: lower left} and $m^n_{\mathcal{W}(a+1,b)} = m^n_{\mathcal{S}(a,b+1)} = -1$ by \ref{case 2: upper right}, we have $\hat o^*_{(a,b)} = 0$.

Let $(a,b) \in \delta \mathcal{I}^-$. If $(a,b) \in  \{a = \alpha_1+1, \, b \le  \beta_2\}$, then $m^n_{\mathcal{E}(a-1,b)} =m^n_{\mathcal{N}(a,b-1)}=m^n_{\mathcal{S}(a,b+1)} = -1$  by \ref{case 2: right}, and $m^n_{\mathcal{W}(a+1,b)}=1$ by \ref{case 2: lower left}. On the other hand, if $(a,b) \in \{a \le \alpha_1, b=\beta_2 +1\}$, then 
$m^n_{\mathcal{E}(a-1,b)}
= m^n_{\mathcal{N}(a,b-1)}
= m^n_{\mathcal{W}(a+1,b)}
= -1
$ by \ref{case 2: top}, and $m^n_{\mathcal{W}(a+1,b)}=1$ by \ref{case 2: lower left}. Hence, in both cases, $\hat o^n_{(a,b)} = -2$.

Finally, let $(a,b) \in \mathcal{I}^- \setminus  \delta \mathcal{I}^-$. If $(a,b) = (\alpha_1+1,\beta_2+1)$, then $
 m^n_{\mathcal{E}(a-1,b)} = m^n_{\mathcal{S}(a,b+1)}=-1$ by \ref{case 2: top} and 
 $
 m^n_{\mathcal{N}(a,b-1)}=
 m^n_{\mathcal{W}(a+1,b)}=-1
 $
 by \ref{case 2: right}. Hence, $\hat o^n_{(a,b)} = -4$. If $b > \beta_2+1$, then every messages received by $(a,b)$ is equal to $-1$ by \ref{case 2: top}, so $\hat o^n_{(a,b)} = -4$. Likewise, if $a > \alpha_1 +1$, then every messages received by $(a,b)$ is equal to $-1$ by \ref{case 2: right}, so $\hat o^n_{(a,b)} = -4$.



\iftoggle{full_version}{
Case $|C| = 2$: without loss of generality, let $C= \{sw,nw\}$ and let $C = \{sw\}$ and let $(\alpha_1,\beta_1)$ and $(\alpha_2,\beta_2)$ be the two endpoints of $R^+$ such that $\alpha_1 \le \alpha_2$.

\begin{figure}[H]
\centering
\begin{subfigure}[t]{0.3\textwidth}
\includegraphics[width= 1 \textwidth]{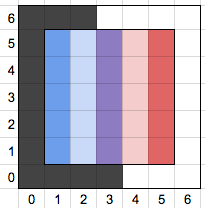}
\end{subfigure}
~
\begin{subfigure}[t]{0.15\textwidth}
\includegraphics[width=1\textwidth]{legend}
\end{subfigure}
        \caption{$|C| = 2$ example where $(\alpha_1,\beta_1) = (2,6), (\alpha_2,\beta_2)=(3,0)$.}
    \label{fig:corner2}
\end{figure}

Without the loss of generality, we can further assume $\alpha_1 \ge \alpha_2$.

One checks that
\begin{align*}
	&\mathcal{I}^+ \setminus \delta \mathcal{I}^+ = \{a < \alpha_1\}\\
	&\delta \mathcal{I}^+ = 
	\{a= \alpha_1\}\\
	&\mathcal{O}^+ \cap \mathcal{O}^-= 	
	\{\alpha_1< a\le  \alpha_2\}\\
	&\delta \mathcal{I}^-  = \{ a = \alpha_2+1\}\\
	&\mathcal{I}^- \setminus \delta \mathcal{I}^- = \{\alpha_2 +1< a\}
\end{align*}
See Figure \ref{fig:corner2} for an example.
Let $i=(\alpha_2,N)$, $j=(\alpha_1+1,1)$,  $k_1=(\alpha_1,N)$, $k_2=(\alpha_1,1)$, $l_1=(\alpha_2+1,1)$, and $l_2=(\alpha_2+1,N)$.
We have the following observations
\begin{enumerate}[label=\roman*]
\item\label{case 3: lower left} $(sw\triangleright i, \mathcal E, \mathcal N )$ is $\mathrm{FC}(1,0)$,
\item\label{case 3: upper right} $(ne\triangleright j, \mathcal W, \mathcal S )$ is $\mathrm{FC}(-1,0)$,
\item\label{case 3: left} $(sw \triangleright k_1, \mathcal S, \mathcal E)$  and $(nw,k_2, \mathcal N, \mathcal E)$ are $\mathrm{BC}(-1, 0)$ with BC direction $\mathcal{W}$,
\item\label{case 3: right} $(ne \triangleright l_1, \mathcal W, \mathcal S)$  and $(se,l_2, \mathcal N, \mathcal W)$ are $\mathrm{BC}(-1, 0)$ with BC direction $\mathcal{E}$.
\end{enumerate}


If $(a,b) \in\mathcal{I}^+ \setminus \delta \mathcal{I}^+$, then by \ref{case 3: left}, then $ m^n_{\mathcal{E}(a-1,b)} 
= m^n_{\mathcal{N}(a,b-1)}
= m^n_{\mathcal{W}(a+1,b)}
= m^n_{\mathcal{S}(a,b+1)}=1
$
and so $\hat o^n_{(a,b)} = 4$. 

If $(a,b) \in\delta \mathcal{I}^+ $, then by \ref{case 3: left}, we have $m^n_{\mathcal{E}(a-1,b)} 
= m^n_{\mathcal{N}(a,b-1)}
= m^n_{\mathcal{S}(a,b+1)}=1$ and by \ref{case 3: upper right}, $m^n_{\mathcal{W}(a+1,b)} = -1$.Hence, $\hat o^n_{(a,b)} = 2$. 

If $(a,b) \in \mathcal{O}^+ \cap \mathcal{O}^-$, then by \ref{case 3: lower left}, $m^n_{\mathcal{E}(a-1,b)} 
= m^n_{\mathcal{N}(a,b-1)} = 1$ and by \ref{case 3: upper right}, $m^n_{\mathcal{W}(a+1,b)}
= m^n_{\mathcal{S}(a,b+1)}=-1$. Hence, $\hat o^n_{(a,b)} = 0$. 

If $(a,b) \in\delta \mathcal{I}^-$, then by \ref{case 3: right}, $m^n_{\mathcal{N}(a,b-1)}
= m^n_{\mathcal{W}(a+1,b)}
= m^n_{\mathcal{S}(a,b+1)}=-1$ and by \ref{case 3: lower left}, $m^n_{\mathcal{E}(a-1,b)} =1$. Hence, $\hat o^n_{(a,b)} = -2$. 

If $(a,b) \in\mathcal{I}^- \setminus \delta \mathcal{I}^-$, then by \ref{case 3: right},  $ m^n_{\mathcal{E}(a-1,b)} 
= m^n_{\mathcal{N}(a,b-1)}
= m^n_{\mathcal{W}(a+1,b)}
= m^n_{\mathcal{S}(a,b+1)}=-1
$ and so $\hat o^n_{(a,b)} = -4$. 
}





\bibliographystyle{ieeetran}
\bibliography{references}

%
%
%
%
%
%
%
%
%
%
%
%
%
%
%
%
%
%

\end{document}